\numberwithin{equation}{section}
\newenvironment{pfof}[1]{\vspace{1ex}\noindent{\em Proof of
#1}\hspace{0.5em}}{\hfill\qed\vspace{1ex}}
\newtheorem{thm}{Theorem}
\newtheorem{ass}{Assumption}
\newtheorem{lem}[thm]{Lemma}
\newtheorem{prop}[thm]{Proposition}
\theoremstyle{definition}
\newtheorem{defn}{Definition}
\newtheorem{rem}[thm]{Remark}
\newtheorem{claim}[thm]{Claim}
\renewcommand{\Re}{\mathbb R}
\def\bea{\begin{eqnarray}}
\def\eea{\end{eqnarray}}
\def\d{\Delta}
\def\bF{\mathbf{F}}
\def\bN{\mathbf{N}}
\def\bn{\mathbf{n}}
\def\bV{\mathbf{V}}
\def\bT{\mathbf{T}}
\def\bX{\mathbf{X}}
\def\bx{\mathbf{x}}
\def\bW{\mathbf{W}}
\def\btau{\boldsymbol{\tau}}
\def\bxi{\boldsymbol{\xi}}
\def\bta{\boldsymbol{\eta}}
\def\ux{\underline{x}}
\def\uX{\underline{X}}
\def\ubx{\underline{\mathbf{x}}}
\def\Plim{\stackrel{\mathbb{P}}{\Longrightarrow}}
\begin{document}


\title[An evolution model for crack patterns]{An evolution model for polygonal tessellations as models for crack networks and other natural patterns}
\author[P. B\'alint, G. Domokos and K. Reg\H os ] {P\'eter B\'alint, G\'abor Domokos and Krisztina Reg\H os}
\address{P\'eter B\'alint, ELKH-BME Stochastics Research Group, and  Department of Stochastics, Institute of Mathematics, Budapest University of Technology and Economics,
M\H{u}egyetem rkp. 3, H-1111, Budapest, Hungary}
\email{pet@math.bme.hu}
\address{G\'abor Domokos, MTA-BME Morphodynamics Research Group and Dept. of Mechanics, Materials and Structures, Budapest University of Technology and Economics,
M\H uegyetem rakpart 1-3., Budapest, Hungary, 1111}
\email{domokos@iit.bme.hu}
\address{Krisztina Reg\H os,MTA-BME Morphodynamics Research Group, Budapest University of Technology and Economics,
M\H uegyetem rakpart 1-3., Budapest, Hungary, 1111}
\email{regoskriszti@gmail.com}
\thanks{Support of the NKFIH Hungarian Research Fund, grants 134199, 142169 and 144059, and of the NKFIH Fund TKP2021 BME-NVA, carried out at the Budapest University of Technology and Economics, is kindly acknowledged. Krisztina Reg\H os: This research has been supported by the program UNKP-22-3 by ITM and NKFIH. The gift representing the Albrecht Science Fellowship is gratefully appreciated.
}
\subjclass[2010]{52C20}

\keywords{normal tiling, mean field theory, evolution equation}

\begin{abstract}
We introduce and study a general framework for modeling the evolution of crack networks. The evolution steps are triggered by exponential clocks corresponding to local micro-events, and thus reflect the state of the pattern. In an appropriate simultaneous limit of pattern domain tending to infinity and time step tending to zero, a continuous time model, specifically a system of ODE is derived that describes the dynamics of averaged quantities. In comparison with the previous, discrete time model, studied recently by two of the present three authors, this approach has several advantages. In particular, the emergence of non-physical solutions characteristic to the discrete time model is ruled out in the relevant nonlinear version of the new model. We also comment on the possibilities of studying further types of pattern formation phenomena based on the introduced general framework.

\end{abstract}
\maketitle
\tableofcontents
\section{Introduction and motivation}\label{sec:intro}
Crack networks are not only beautiful natural patterns, their geometry tells the story of the geophysical processes which have
created them. No wonder that there exists substantial literature on this subject \cite{adler_fracture_network_book,columnar_joint_science_1988, goering_columnar_2008,  haltigin2012geometric} which aims to describe the geometry of these patterns as polygonal tessellations of the Euclidean plane. While the theory of the latter is well established in its own right \cite{grunbaumshepard, senechal}, the connection to crack networks was established by a mean field theory \cite{Gray_1976_irregular_mosaics_formula} which was subsequently formalized and extended \cite{domokos2019honeycomb, schneider2008stochastic}.  This theory helped to create a meaningful classification of natural crack patterns \cite{Plato} which could be even extended into the third dimension. Polygonal tessellations offer not only a good model for many crack networks, the same model can be also used to describe a much broader
variety of ground patterns.  Figure \ref{fig:00} illustrates these patterns, the first two being crack mosaics, the third a
ground pattern of different origin but equally well approximated by a polygonal tessellation.

\begin{figure}[h!]
\begin{center}
\includegraphics[width=\textwidth]{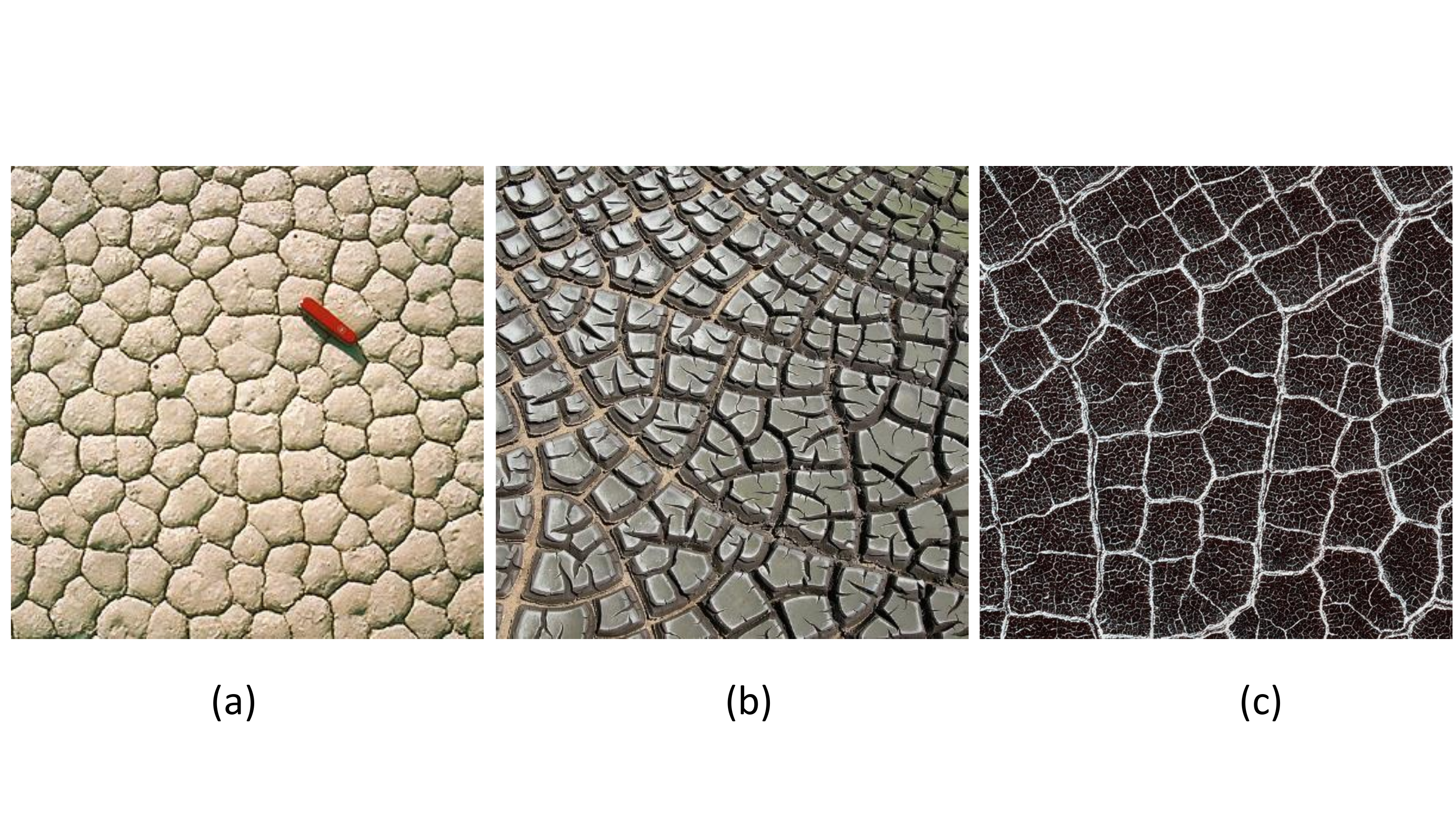}
\caption{Polygonal ground patterns. (a) and (b) Desiccation crack patterns in mud. (c) Polygonal patterned ground on Mars.  Photo credit (a) Charles E. Jones University of Pittsburgh, Pittsburgh, PA
(b) Hannes Grobe, Alfred Wegener Institute, Bremerhaven, Germany (c) NASA/JPL}\label{fig:00}
\end{center}
\end{figure}

Most of the mentioned results relate to the \emph{static} characterization of  pattern geometry. However, in \cite{Plato} the idea was put forward that perhaps the same tools could be used to build a dynamical theory, describing the time evolution of these patterns. The experimental results in  \cite{goering_evolv_pattern_2013} paved the way for the first such model which was introduced in \cite{disc} with the goal to find potential \emph{evolution paths} connecting crack networks of related geological origins. Figure \ref{fig:00}(a) and (b) illustrate two such patterns, observed in drying mud. These two patterns were used in \cite{disc}  to illustrate all basic concepts and the application of the evolution model. In this article we will use the very same examples, thus offering a direct comparison to the results in \cite{disc}, where a discrete time evolution model was introduced, driven by local micro-events associated with constant, a priory probabilities. In this paper we will refer to the evolution model introduced in \cite{disc} as the \textit{discrete model}. It is revisited in the form of Assumptions~\ref{ass:1} and~\ref{ass:2} in Section~\ref{sec:discmodel}. While the discrete model shows good match with data, it has several disadvantages which can be summarized as follows:
\begin{itemize}
\item We aim to describe the continuous time evolution of density type quantities such as the average degree of nodes or faces, which  actually make sense in a thermodynamic limit of infinite (or growing) patterns. As the consecutive steps of the evolution correspond to local micro-events, in particular to the cracking of faces or the healing of irregular nodes, it is expected that as growing patterns are considered, the process accelerates. Accordingly, we seek for a description that simultaneously incorporates the thermodynamic limit of infinite patterns and the transition to continuous time. The discrete model, however, has no notion of time built in.
\item In the discrete model, the probabilities of the different types of steps are represented by rigid quantities, which thus cannot reflect the current state of the mosaic. This is an unrealistic feature as it can be expected that, for example, as the number of irregular nodes decreases, there are less options for healing rearrangements, which thus decreases the likelihood of such steps. While it is true that state-dependent probabilities could be also introduced in the discrete model, this was not done in \cite{disc}.
\item Closely related to the above mentioned issues, as already observed in \cite{disc}, the discrete model  admits the emergence of some nonphysical solutions.
\end{itemize}
We will discuss these features of the discrete model in Sections \ref{sec:discmodel} and \ref{sec:return}.
The mathematical goal of the current paper is threefold:
\begin{enumerate}
    \item We introduce a new evolution model by replacing Assumption~\ref{ass:2} of the discrete model by the new Assumption~\ref{ass:3}. The main new idea is to tie the micro-events in Assumption \ref{ass:3} to \emph{exponential clocks} with specific geometric locations. On top of laying a more rigorous mathematical foundation, this way we aim to provide a possibility for the systematic elimination of the nonphysical solutions the discrete model produced.
    \item The set of 3 variables used in \cite{disc} is extended to a system of arbitrary number of variables. This is not only a quantitative change: the new framework would include models of physical processes which are conceptually fundamentally different from the one discussed in \cite{disc}.
    \item In Theorem~\ref{lem:1} we derive the rigorous limit, resulting in the general \textit{governing ordinary differential equation} (\ref{ode1}) for the evolution of infinite network patterns, driven by local events. In order to connect the general ODE with specific physical processes, we define the \emph{fundamental table} associated with the latter (Definition \ref{def:fund}).
\end{enumerate}
Although the differential equation (\ref{ode1}) in Theorem \ref{lem:1} includes all three elements of the above list, let us remark that
 they are, nevertheless, independent: any of these three constructions could be realized without the other two. Accordingly, we believe that our new model is a good candidate for the realistic description of various types of crack evolution processes as it addresses the above three points simultaneously.
Beyond these goals, we aim to demonstrate the power of this general framework by computing the fundamental tables for the model in \cite{disc} as well as for its extended version with exponential clocks. Using the fundamental tables we also derive their respective governing ODEs to which we will refer as the \emph{linear model} and the \emph{nonlinear model}.

The structure of the paper is the following: in Section \ref{sec:discmodel} we describe the assumptions and the discrete time model presented in \cite{disc}. The mathematical core of the paper is Section \ref{sec:math} where we aim to achieve the goals (1)-(3), listed above. In Section \ref{sec:return} we return to the discrete model of \cite{disc}, fit it into the expanded framework presented in the previous section and derive both the corresponding linear system (which is the continuous time version of the model in \cite{disc}) and also
\textit{the nonlinear system which is free of nonphysical solutions}. We derive the fundamental tables for both cases and, based on the latter, we compute the governing ODE. In Section \ref{sec:sum}
we compare the trajectories computed for the linear and nonlinear models. We also draw some geological conclusions and describe potential application areas.

\section{Description of the discrete time model presented in \cite{disc}}\label{sec:discmodel}
\subsection{Static assumptions and definitions}
The discrete model in \cite{disc} is based on the following assumption:
\begin{ass}\label{ass:1}
 Let $M$ be a convex, normal, balanced mosaic \cite{grunbaumshepard} and let $M_d$ denote a finite domain of $M$, defined by a ball with diameter $d$. Let $M_d$ have $F(=F_d)$ faces (cells) $V(=V_d)$ nodes (vertices). We assume that $M_d$  is a geometric model of a crack network.
\end{ass}

We remark that all cells of a convex mosaic are convex polygons \cite{schneider2008stochastic}.

\begin{defn}\label{def:1} \cite{regos2022twovertex}
Let $M_d$ be a finite domain of a balanced, convex mosaic (as noted in Assumption \ref{ass:1}) and let $M_d$
have $F$ faces (cells) and $V$ nodes (vertices), i.e. points where more than 2 cells overlap.
We say that a (polygonal) cell $f_i$, $(i=1,2, \dots F)$ has a \emph{regular vertex} on its boundary at point $P_f$,  if two straight edges of $f_i$ meet
at $P_f$ at an angle different from $\pi$. In addition, $f_i$ may have \emph{irregular vertices} in the interior
of its edges at points where more than 2 cells overlap. For a cell (face) $f_i$,$(i=1,2, \dots F)$ let $v^{\star}_i$ denote the number of its regular vertices, $v_i\geq v^{\star}_i$ the total number of its vertices and we call $v_i$ and $v^{\star}_i$ respectively the \emph{combinatorial degree} and \emph{corner degree} of the $i$th face. We use the same counting rule at nodes and we call $n_i$ and $n^{\star}_i$ $(i=1,2, \dots V)$ respectively the \emph{combinatorial degree} and \emph{corner degree} of the $i$th node. We call the $i$th node regular if $n_i=n^{\star}_i$, we denote the number of irregular nodes  by $V_I$ and we call
$$ r(M_d)=\frac{V-V_I}{V}$$ the regularity of $M_d$.
Using these concepts we also introduce
\[
N_F=\sum_{i=1}^F v_i ;\qquad N_V=\sum_{j=1}^V n_j
\]
and
\[
N^{\star}_F=\sum_{i=1}^F v^{\star}_i ;\qquad N^{\star}_V=\sum_{j=1}^V n^{\star}_j.
\]
\end{defn}
We remark that the difference between  $N_F$ and $N_V$ (respectively,  between $N^{\star}_F$ and $N^{\star}_V$) is only due to the boundary; on domains with no boundary this difference vanishes.  We are interested in the $d\to \infty$ limit and since, according to Assumption \ref{ass:1}, $M$ is a
balanced mosaic \cite{grunbaumshepard}, the respective differences between $N_F$ and $N_V$
and between $N^{\star}_F$ and $N^{\star}_V$ will vanish in this limit, in particular, we have
\begin{equation}
    \lim _{d \to \infty}\frac{N_F}{N_V}=\lim _{d\to \infty}\frac{N^{\star}_F}{N^{\star}_V}=1.
\end{equation}
To simplify notation, we introduce
\begin{equation}
    N=(N_F+N_V)/2, \qquad N^{\star}=(N^{\star}_F+N^{\star}_V)/2.
\end{equation}
In the context of geophysical crack networks, corner degrees appear to be more relevant than combinatorial degrees, so
henceforth we will only track the evolution of $F,V$ and $N^{\star}$ which we call the \emph{fundamental variables} and we observe that under Assumption \ref{ass:1} we have
\begin{equation}\label{npropto}
    (F,V,N^{\star}) \asymp d^2
\end{equation}
since $M$ is assumed to be a balanced mosaic \cite{grunbaumshepard}.
Here, and throughout the paper, given some quantity $R\to\infty$, $f(R) \asymp g(R)$ means that, there exists some $C>0$ independent of $R$ such that $C^{-1} g(R) \le f(R)\le C g(R)$ for any $R$.

\subsection{Representation of mosaics}
To record and represent mosaics, it is plausible to use dimensionless variables.
In \cite{disc}
\begin{equation}\label{variables}
    \bar n ^{\star}=\frac{N^{\star}}{V}, \quad \bar v^{\star}=\frac{N^{\star}}{F}
\end{equation}
have been introduced as the \emph{average nodal corner degree } and \emph{average cell corner degree} as such variables
and the $[\bar n^{\star},\bar v^{\star}]$ plane is referred to as the \emph{symbolic plane} of mosaics \cite{domokos2019honeycomb, Plato}.
Alternatively, we may use the reciprocal variables
\begin{equation}\label{inversesymbolic}
    x =\frac{V}{N^{\star}}, \quad y=\frac{F}{N^{\star}}
\end{equation}
and we will refer to $[x,y]$ as the \emph{inverse symbolic plane}.

\begin{rem}\label{convdomain}
Both representations are illustrated in Figure \ref{fig:0}. One advantage of the application of the inverse symbolic plane is that the \emph{domain of convex mosaics} \cite{domokos2019honeycomb} is defined by the four straight lines $L_i$:
\begin{equation}\label{eq:Q}
\begin{array}{rrcl}
L_1: & y & = & x/2, \\
L_2: & y & = & 1/2 - x, \\
L_3: & y & = & 1/3, \\
L_4: & y & = & 1/2 - x/2.
\end{array}
\end{equation}
Throughout the paper, this domain will be denoted by the symbol $Q$.
\end{rem}

\begin{rem}\label{linearmodel}
In Section \ref{sec:math} we will show that the trajectories
of the ODE corresponding to the model based on Assumptions \ref{ass:1} and \ref{ass:2} appear as straight lines in the inverse symbolic
plane. This fact was the motivation to call this the \emph{linear model.}
\end{rem}

\begin{figure}[h!]
\begin{center}
\includegraphics[width=1.2\textwidth]{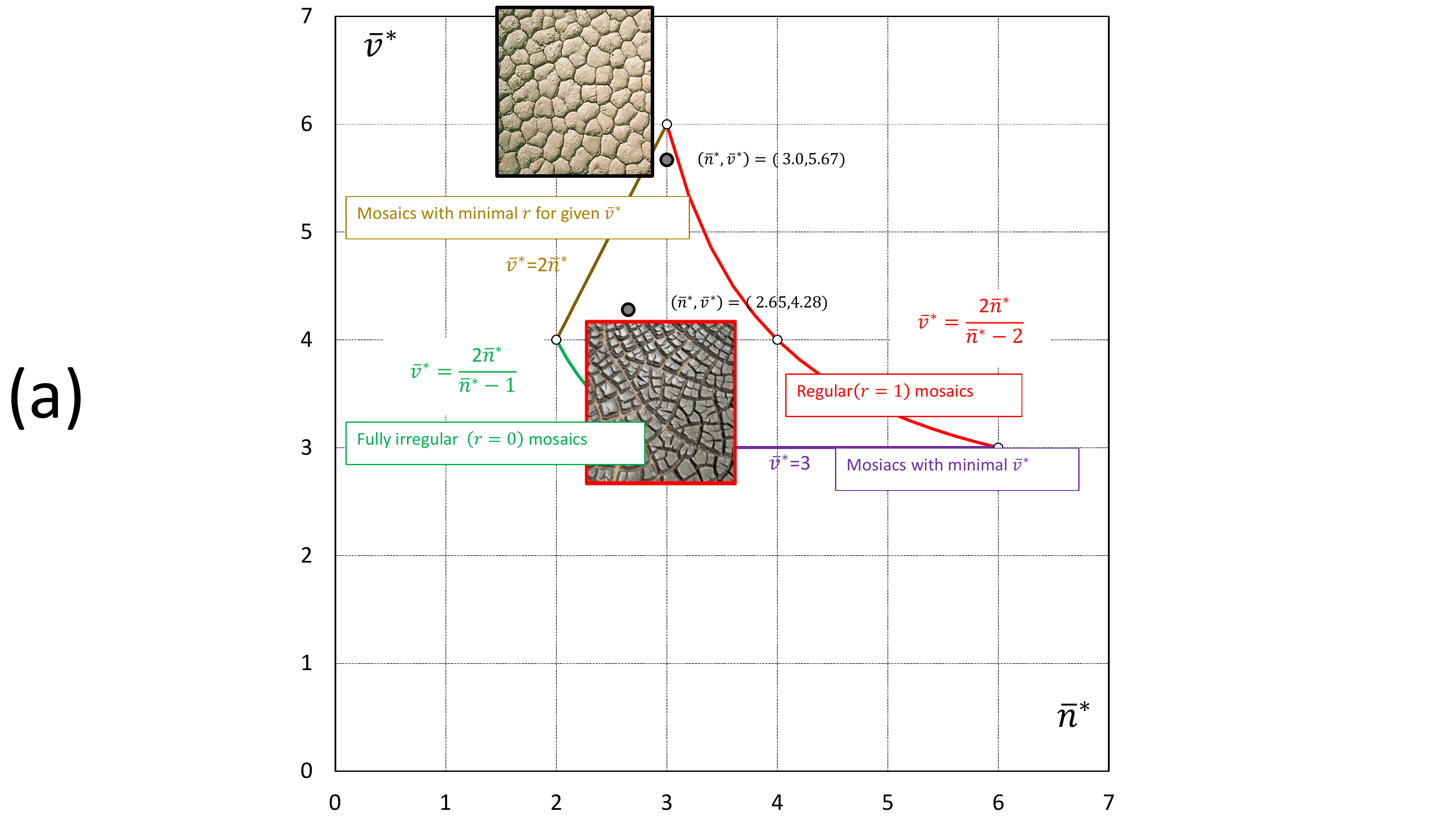}
\includegraphics[width=1.2\textwidth]{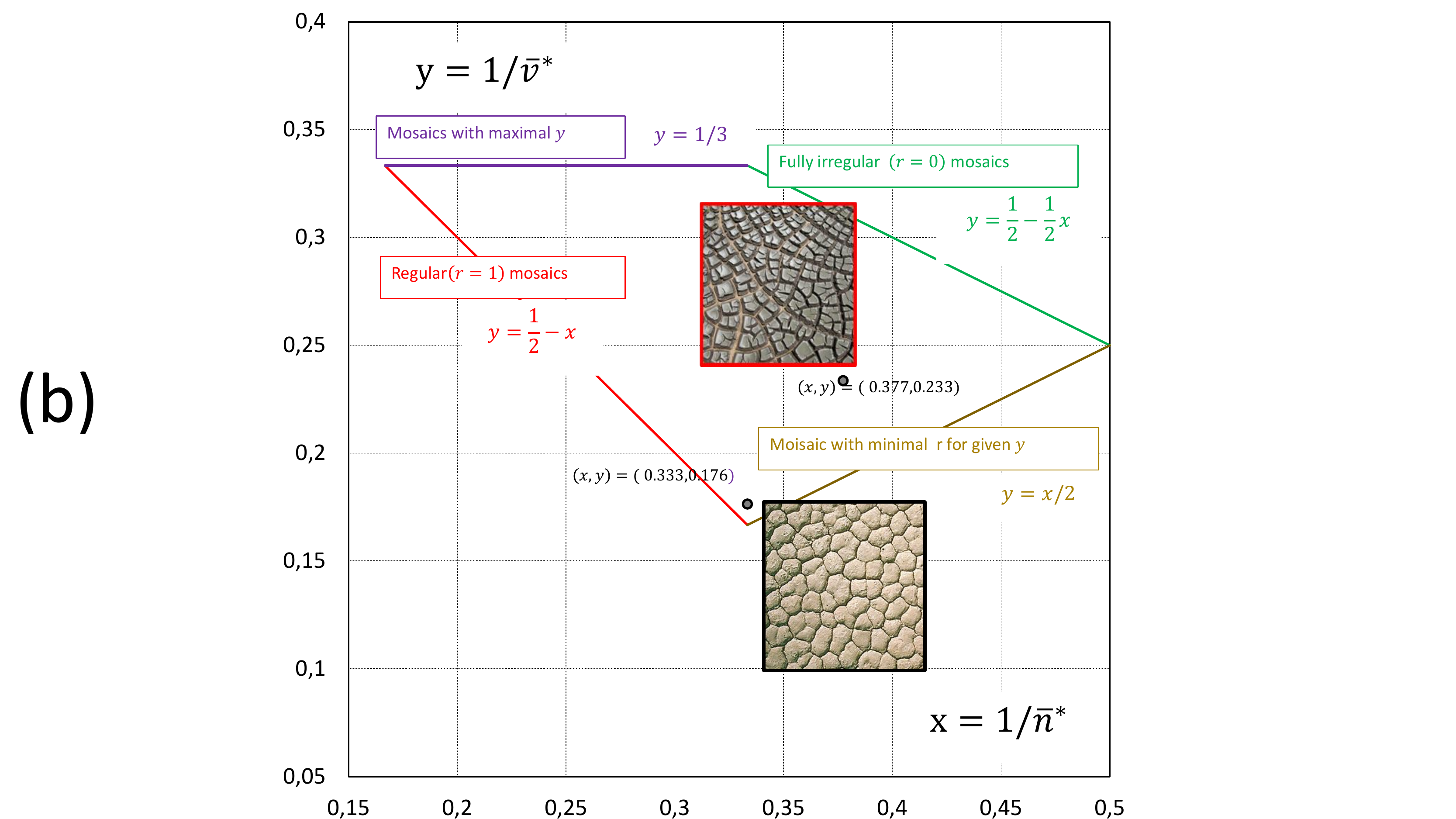}
\caption{Illustration (a)  of the symbolic plane and (b) of the inverse symbolic plane. The region bounded by the red, purple, green and brown boundary curves (lines in case (b)) corresponds to physical solutions.}\label{fig:0}
\end{center}
\end{figure}

\subsection{Dynamic assumptions and definitions}
We may call Assumption \ref{ass:1} a \emph{static} assumption, defining the model of the crack network for any fixed time $t=t_0$. Now we include the model of evolution and we formulate, based on \cite{disc} a set of hypotheses to which we will jointly refer as the \emph{discrete model}. In Section \ref{sec:math}, we will embed this model in a more general framework as a \textit{linear model}. The later name refers to the fact that  the trajectories of this linear model appear as straight lines in the special representation introduced in equation \eqref{inversesymbolic}, called the \emph{inverse symbolic plane}, see also Remark \ref{linearmodel}. In Section \ref{sec:return} we will return to the current model and explain its linearity in full depth.

\begin{ass}\label{ass:2} (Discrete model)
We will assume that the evolution of the finite mosaic $M_d$ defined in Assumption \ref{ass:1} and Definition \ref{def:1} is driven by two discrete micro-events which we call $R_0,R_1$, respectively. We describe these steps below.
\begin{enumerate}
\item During \emph{secondary cracking}, one cell of the primary crack network is split into two parts along a straight line segment, connecting two points belonging to the relative interior of two different edges of the cell. ($R_0$-type step). It is easy to see that $R_0$-type steps retain the convexity of the initial mosaic. See Figure \ref{fig:2}.
\item During \emph{crack healing-rearrangement}, the edges and nodes of the crack network are rearranged so that 'T' nodes evolve into 'Y' nodes \cite{goering_evolv_pattern_2013} and each such event corresponds to an $R_1$-type step.  See Figure \ref{fig:2}. The convexity of the mosaic is retained also in this step.
\item The micro-events $R_0$ and $R_1$ occur randomly with respective probabilities $p_0=(1-q)$ and $p_1=q$. The parameter $q$ is a global constant in the evolution process.\footnote{This rule is implicitly assuming that in a ``large'' mosaic there always exist irregular (i.e. T-shaped) nodes. However, at least for $q$ large, irregular nodes may be eliminated by the process, which is the reason for the occurrence of nonphysical solutions mentioned above, see also Section \ref{ss:nonphys}.}
\end{enumerate}
\end{ass}

\begin{figure}[h!]
\begin{center}
\includegraphics[width=\textwidth]{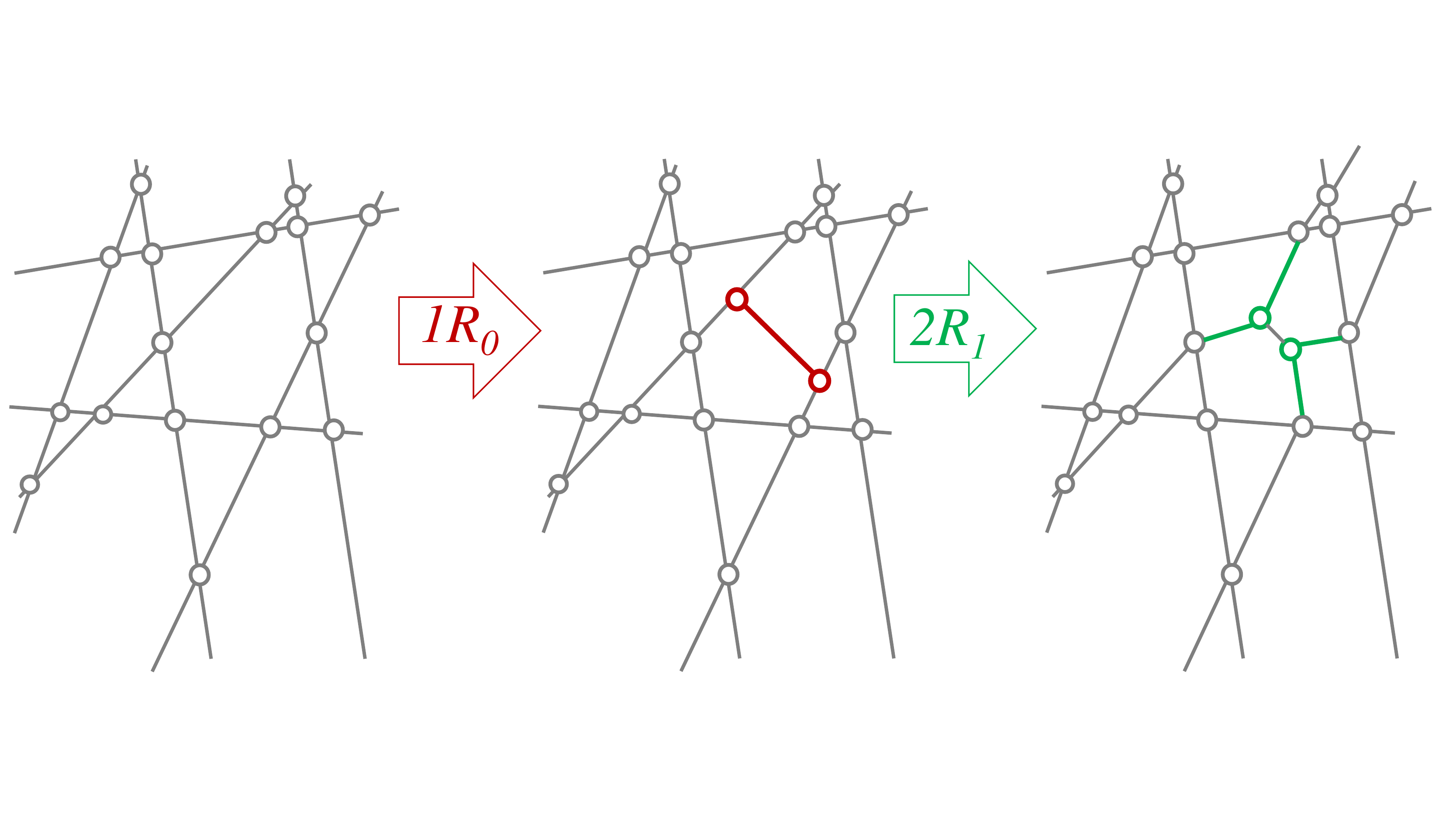}
\caption{Illustration of the micro-events $R_0, R_1$. The associated probabilities in the model presented in \cite{disc} are: $p_0=(1-q) \quad p_1=q$. The label $2R_1$ on the green arrow indicates that the figure corresponds to two crack healing events.}\label{fig:2}
\end{center}
\end{figure}

\begin{rem}
At each micro-event the value of the fundamental variables $N^{\star},F,V$ may remain constant or it may change. Collectively, we will briefly refer to these (possibly zero) changes \emph{increments} and note that they can be classified in two different manners:
\begin{itemize}
\item We may classify them based on the type of micro-event triggering the increment. When doing this, we will use brackets: for example, the increment in the fundamental variable $F$ triggered by the micro-event $R_1$ will be denoted by $\Delta F(1)$.
These increments are \emph{deterministic} and their values are provided,
based on Figure \ref{fig:2}, in equation \eqref{linear_increments} below.
\item We may classify them based on the time order in which they occur, more precisely, we may record the serial number of the micro-event triggering the increment. When doing this, we will use subscripts: for example, the increment in the fundamental variable $F$ triggered by the $k$th micro-event  will be denoted by $\Delta F_k$. As the type of the $k$th micro-event is chosen randomly, these increments are random quantities, in fact, based on Assumption \ref{ass:2}, i.i.d. (independent, identically distributed) random variables. By considering the geometric arrangements in Figure \ref{fig:2}, we may compute their expected values as
\begin{equation}\label{linear_increments}
\begin{array}{rcccccl}
E(\Delta N^{\star}) &  = & \sum_{i=0}^1\Delta N^{\star}(i)p_i & = & 4(1-q)+1q & = & 4-3q \\
 E(\Delta F) &  = &  \sum_{i=0}^1\Delta F(i)p_i & = &  1(1-q)+0q & = & 1-q \\
 E(\Delta V) & =  &  \sum_{i=0}^1\Delta V(i)p_i & = & 2(1-q)+0q & = & 2-2q
\end{array}
\end{equation}
\end{itemize}
It is clear that the above two classifications should not be applied simultaneously: the $k$th micro-event can be either of type $R_0$ or of type $R_1$, but not both.
\end{rem}

\subsection{Emergence of nonphysical solutions.\label{ss:nonphys}}
Obviously, an $R_1$-step can only be performed on an irregular 'T' node.
If we assume the initial mosaic to be regular and we observe that one step of type $R_0$  generates two irregular nodes, the first $k$ pieces of $R_1$ steps must be preceded by at least $(k /2)$ pieces of $R_0$ steps to avoid nonphysical solutions. More generally
we can say that any trajectory with $q > 2/3$ will, sooner or later, exit the domain of geometrically admissible mosaics since
the annihilation rate of 'T' nodes is faster than the rate by which they are generated. We remark that this problem was duly noted in \cite{disc} and the emergence of nonphysical solutions has been described in detail. The problem is rooted in  Assumption \ref{ass:2}(3) which states that
the probabilities $p_0,p_1$ are global constants and the actual state of the mosaics does not affect these numbers. One
of our goals with the expanded model  is to overcome this difficulty and we will do this in Section \ref{sec:math} where Assumption \ref{ass:2} will be replaced by Assumption \ref{ass:3}.

\subsection{Discrete time scales}
In the model presented in \cite{disc} the time elapsing between micro-events was not discussed. Here we describe a setting and related terminology to account for time scales throughout the paper. Simultaneously, to distinguish deterministic from random quantities, we will denote the later by boldface symbols.

Our setting is as follows. Our aim is to describe the evolution of the mosaic between times $t$ and $t+\d t$. The state of the mosaic is deterministic at time $t$, however, by time $t+ \d t$, it's state is random. The time interval $\d t$ is deterministic, on the one hand it is infinitesimally small on some macroscopic time scale, on the other hand, it is asymptotically longer than the average time that elapses between two consecutive micro-events. In section~\ref{sec:math} we give a more precise description of the asymptotic regime in which these time scales can be interpreted.

Let us denote the times of the consecutive micro-events by $(t<)\btau_1<\btau_2<...$ Then we assume that the interarrival times $\d\bT^k=\btau_k- \btau_{k-1}$, $k\ge 1$ (with the convention $\btau_0=t$) are independent and identically, in fact, exponentially distributed. In other words, the $\{\btau_k\}$ form a (homogeneous) Poisson process. The common expected value of the interarrival times will be denoted as  $E(\d\bT)$. Given the time interval $\d t$, introduce the random variable
\[
\bn_{\d}=\max\{k \ge 0 \,|\, \btau_k \le t+\d t \} =\max \{k \ge 0 \,|\, \d\bT^1+\d\bT^2+\dots+\d\bT^k\le \d t \}
\]
that is, the number of micro-events in the time interval $[t,t+\d t]$. By standard properties of the Poisson process, we have
\begin{equation}\label{E_n_delta}
E(\bn_{\d})=\dfrac{\d t}{E(\d\bT)}
\end{equation}

Now we can write a recursion formula
 \begin{equation}\label{discretesystem}
     \begin{array}{rcccl}
     \bF(t+\Delta t) & = & F(t) & + & \bF_{\Delta} \\
     \bV(t+\Delta t) & = & V(t) & + & \bV_{\Delta} \\
     \bN^{\star}(t+\Delta t) & = & N^{\star}(t) & + & \bN^{\star}_{\Delta},
     \end{array}
 \end{equation}
 where
 \begin{equation}\label{deltas1}
       \bF_{\Delta}  =   \sum_{k=1}^{\bn_{\Delta}}\Delta \bF_k, \quad
       \bV_{\Delta}  = \sum_{k=1}^{\bn_{\Delta}}\Delta \bV_k, \quad
       \bN^{\star}_{\Delta}   =  \sum_{k=1}^{\bn_{\Delta}}\Delta \bN^{\star}_k,
\end{equation}
with $\Delta \bF_k$,  $\Delta \bV_k$, $\Delta \bN^{\star}_k$ denoting the increments of $F$, $V$ and $N^{\star}$, respectively, at the $k$th micro-event.

In what follows, we will refer to $F$, $V$ and $N^{\star}$ as the fundamental variables of the model. In section~\ref{sec:math} we will introduce a formalism that allows an extended class of fundamental variables.

\subsection{Problem statement}

In this paper we propose to improve, expand and generalize the model presented in \cite{disc} in the following manner:
\begin{enumerate}
    \item We will replace the a priory probabilities $p_i$ by model-based probabilities, derived from an expanded set of assumptions, which also includes the physical hypothesis about the origin of the micro-events.
    To this end, we will use \emph{exponential clocks}, the number of which will be a function of the fundamental variables.
    \item We will  admit an arbitrary number of fundamental variables which we will denote in a homogeneous manner by $X_i$, $(i=0,1, \dots I)$ and we will use the dimensionless variables $x_i=X_i/X_0$, $(i=1,2, \dots I)$, with $X_0=N^{\star}$.  The $x_i$ correspond to the reciprocals of
    the variables in (\ref{variables}). We will show the mathematical advantages of this representation.
    \item We will derive rigorously the simultaneous $d \to \infty, \Delta t \to 0$ limit to obtain the governing differential equation instead of the recursion formula (\ref{discretesystem})-(\ref{deltas1}).
\end{enumerate}

	\section{Derivation of the general governing differential equation}\label{sec:math}

\subsection{A more general setting for the discrete process}
To admit the inclusion of additional geometric (combinatorial) variables, we re-formulate (\ref{discretesystem}) as
\begin{equation}\label{discrete_general}
\bX_j(t+\Delta t)=X_j(t)+\bX_{\Delta, j}, \qquad (j=0,1,2 \dots J)
\end{equation}
(this translates into (\ref{discretesystem}) under
$ J = 2 ;  X_0 \equiv N;  X_1 \equiv F;  X_2 \equiv V.$)
Next we rewrite (\ref{deltas1}) as
\begin{equation}\label{general_deltas}
\bX_{\Delta ,j}=\sum_{k=1}^{\bn_{\Delta}} (\Delta \bX_j)_k\qquad (j=0,1,2 \dots J)
\end{equation}
and instead of the dimensionless variables introduced in (\ref{variables}) we use their reciprocals (recall our convention $X_0=N^*$):
\begin{equation}\label{general_variables}
x_j=\frac{X_j}{X_0}; \qquad (j=1,2 \dots J).
\end{equation}
Furthermore, let us introduce the notations
\[
\uX=(X_0,X_1,\dots,X_J)\quad \text{and} \quad \ux=(x_1,\dots,x_J)
\]
for the $J+1$ and $J$ component vectors describing the state of the system in the macroscopic and rescaled variables, respectively.

\subsection{Clocks and the clock function}
In the general model, Assumption \ref{ass:2} is generalized and will be replaced by
\begin{ass}\label{ass:3}
The evolution of the variables $X_j$, $(i=0,1,2, \dots J)$ is driven by  micro-events $R_i,  (i=0,1,\dots I)$ of $I+1$ different types.
These events are triggered by $(I+1)$ types of \emph{clocks} which we associate with the fundamental variables on the mosaic
and we express the number $C_i$ of the $i$th clock type as a \emph{linear} function
\begin{equation}\label{clocknumber}
C_i=C_i(\uX)=\sum_{j=0}^{J} C_{i,j}X_j, \quad (i=0,1,2, \dots I).
\end{equation}
The coefficient matrix $C_{i,j}$ is one of the main (deterministic) inputs
of the model.
 The total number of clocks is $\sum_{i=0}^I C_i$, which may be compared (although not equal to) $X_0$. We assume that with increasing diameter $d$, the quantities $X_j$ $(j=0,\dots,J)$, and thus $C_i$ $(i=0,\dots,I)$, grow as $\asymp d^2$.

We discuss clock signals on three different levels. Throughout, we will use the standard property that the minimum of two exponential variables is exponentially distributed with the parameters of the distributions added.\footnote{This is analogous to the property of the
Poisson process that the superposition of two independent Poisson processes is a Poisson process with the intensities added. Later we will see that in the appropriate limit it is indeed relevant to consider Poisson processes.}
\begin{itemize}
    \item Any \emph{individual clock} of type $R_i$ is giving
signals according to an exponential random variable with parameter $\lambda_i$: the individual clock signals are coming at  frequencies  $\lambda_i$.

\item Any \emph{clock type}  (i.e. the set of all clocks of type $R_i$) is giving signals according to a cumulative process at frequency
$ f_i=C_i\lambda _i $. In particular, the interarrival times between consecutive impacts of type $R_i$ are distributed as an exponential random variable $\Delta \bT_i$, where $\frac{1}{E(\Delta \bT_i)}=f_i=C_i \lambda_i \asymp d^2$

\item The set of \emph{all clocks} is giving signals according to cumulative process at frequency
$f=\sum_{i=0}^I f_i=\sum_{i=0}^I C_i \lambda_i$ and has the interarrival times $\Delta \bT$ which are exponential with
\begin{equation}\label{propdeltaTT}
f=\frac{1}{E(\Delta \bT)}= \sum_{i=0}^I f_i= \sum_{i=0}^I C_i \lambda_i \asymp d^2.
\end{equation}
\end{itemize}
 \end{ass}

 We remark that Assumption \ref{ass:2} can be regarded as a special case of the scenarios
 admitted by Assumption \ref{ass:3}.  We will illustrate this in Section \ref{sec:return} where
 we derive the governing equations for both models.

To obtain quantities of order $1$, it makes sense to rescale  by $X_0$. We introduce the \emph{clock densities}
\begin{equation}\label{clockdensity}
c _i=\frac{C_i}{X_0}, \quad (i=1,2, \dots I)
\end{equation}
and the individual \emph{clock functions}
\begin{equation}\label{clockfunction0}
\gamma _i=\frac{f_i}{X_0}=\frac{1}{X_0 E(\Delta \bT_i)} =c_i\lambda _i \quad (i=1,2, \dots I),
\end{equation}
Both of these quantities ($c_i, \gamma _i$) are associated with clock types, i.e. they characterize the set of $R_i$-type clocks.

The ratio $\gamma_{i_1}/\gamma _{i_2}$ of any two clock functions agrees with the ratio of the cumulative frequencies associated with the $i_1$-type and $i_2$-type clocks.
We remark that both $C_i, c_i$ and $\gamma _i$ may depend on the variables $x_j$, $(j=1,2, \dots J)$ so henceforth we will write $\gamma_i(\ux)$.
Now we introduce
\begin{defn}\label{clockfunction}
We call
\[
\gamma(\ux)=\sum_{i=0}^I \gamma_i (\ux)=\frac{1}{E(\Delta \bT)X_0}
\]
the \emph{cumulative clock function} of the evolution defined by (\ref{discrete_general}).
\end{defn}

\subsection{Probabilities and expected values associated with micro-events}
As we will discuss below, it is a consequence of the above setup that, in the appropriate simultaneous limit of $\d t\to 0$; $d\to \infty$ the process can be also regraded as follows. The arrivals of all micro-events follow a cumulative Poisson process of intensity $f$ specified in \eqref{propdeltaTT}. This is a sum of $I+1$ independent Poisson processes of frequencies $f_i$, corresponding to the micro-event of type $R_i$, where $i=0,\dots I$. Accordingly, the event type $R_i$ occurs with respective probability $p_i(\ux)  (i=0,1, \dots I)$ which can be computed as
\begin{equation}\label{probabilities}
p_i(\ux)=\frac{f_i}{f}=\frac{\gamma_i(\ux)}{\gamma(\ux)}.
\end{equation}

If the $k$th micro-event is of type $R_i$ then the
increment of the variable $X_j$ is  $(\Delta \bX_j)_k=\Delta X_j(i)$, where the value $\Delta X_j(i)$ is deterministic.
Accordingly, the expected increments per micro-event are
\begin{equation}\label{expt}
    \nu_j=E(\Delta \bX_j)= \sum_{i=0}^I \Delta X_j(i)p_i.
\end{equation}
where we introduce the notation $\nu_j=E(\Delta \bX_j)$ for the expected values. These may be regarded as the average increment of the given variable computed for one micro-event while the variables  $\bX_{\Delta,j}$ (cf.~\eqref{general_deltas}) describe the increment per time-step $\Delta t$, which consists of $\bn_{\d}$ micro-events, where $\bn_{\d}$ is random, too. We also remark that equation (\ref{expt}) is a straightforward generalization of (\ref{linear_increments}).
    Based on these considerations we introduce
 \begin{defn}\label{def:fund}
    If  $(I+1)$ types of micro-events $R_i$, $(i=0,1,\dots I)$ are driving the evolution of $(J+1)$ variables $X_j$, $(j=0,1,2, \dots J)$ then the \emph{fundamental table} of the process is a $(I+1)\times (2J+3)$ matrix, consisting of $(2J+3)$ column vectors the first $J+1$ of which are the coefficients $C_{i,j}$ (as given in equation (\ref{clocknumber})), the next column contains the individual clock intensities $\lambda _i$, complemented by the $(I+1) \times (J+1)$ matrix where the $(i,j)$ element is $\Delta X_j(i)$, $(i=0,1,2, \dots I).$
    \end{defn}
The fundamental table may be regarded as the deterministic input data necessary to write down the governing differential equation. As we will show, every term in the latter can be computed from the former.

\subsection{Simultaneous limits}
In this system we have two essential discrete parameters: the diameter $d$ and the time-step $\Delta t$.
To obtain an ordinary differential equation instead of \eqref{discretesystem}, we seek a simultaneous limit where
$d \to \infty$ and $\Delta t  \to 0$.  Let us parametrize such a limit by the parameter $\rho$. As we will shortly show, only those limits would perform this task where $d$ is growing faster than the decrease of $\Delta t$, more precisely, we have
\begin{equation}\label{lim1}
\lim _{\rho \to \infty}d(\rho)=\infty; \quad \lim _{\rho \to \infty}\Delta t(\rho)=0; \quad \frac{d}{d\rho}(d(\rho)\Delta t(\rho))=0,
\end{equation}
from which we get
\begin{equation}\label{propdeltat}
    \Delta t \asymp d^{-1}.
\end{equation}

\subsection{Main result: the governing differential equations}
Our goal is to prove

\begin{thm}\label{lem:1}
Assume that at time $t$ the state of a mosaic is given by the variables $\ux=(x_1,\dots,x_J)$. Let $\ubx(t+\d t)=(\bx_1(t+\d t),\bx_2(t+\d t),\dots,\bx_J(t+\d t))$ denote the (random) state of the mosaic at time $t+\d t$. Then, in the simultaneous double limit defined in equation (\ref{lim1}), the random variables
\[
\frac{\bx_j(t+\d t) - x_j(t)}{\d t}; \qquad (j=1,2, \dots J)
\]
converge in probability to
\begin{equation}\label{ode1}
    \frac{dx_j}{dt}  =  \gamma(\ux)\cdot \nu_0(\ux) \cdot (\hat x_j(\ux)-x_j), \quad (j=1,2, \dots J),
\end{equation}
where
$$ \hat x_j (\ux)=\frac{\nu_j(\ux)}{\nu_0(\ux)}.$$
\end{thm}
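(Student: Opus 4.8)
The plan is to start from the exact stochastic recursion \eqref{discrete_general}, pass to the rescaled variables $\ux$, and then extract the deterministic limit by a concentration (law of large numbers) argument for the compound counting process on $[t,t+\d t]$: its mean number of micro-events tends to infinity, while the relative change of the state — hence of every clock rate — tends to zero. Concretely, since $x_j=X_j/X_0$, taking the ratio of the $j$th and the $0$th line of \eqref{discrete_general} gives the exact identity
\[
\bx_j(t+\d t)-x_j(t)=\frac{\bX_{\d,j}-x_j(t)\,\bX_{\d,0}}{X_0(t)+\bX_{\d,0}},\qquad (j=1,\dots,J),
\]
so it is enough to show that, in the double limit \eqref{lim1}, the denominator is $X_0(t)(1+o_{\mathbb{P}}(1))$ and the numerator divided by $\d t$ converges in probability to $X_0(t)\,\gamma(\ux)(\nu_j(\ux)-x_j\nu_0(\ux))$; factoring out $\nu_0$ and recalling $\hat x_j=\nu_j/\nu_0$ then gives \eqref{ode1}.

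The second step is the concentration estimate. By the Poisson thinning property recalled before Assumption \ref{ass:3} — and provisionally freezing the clock rates at their time-$t$ values, which is the point addressed in the last paragraph — the number $\bn_{\d}$ of micro-events on $[t,t+\d t]$ is Poisson with mean $E(\bn_{\d})=\d t/E(\d\bT)=X_0(t)\,\gamma(\ux)\,\d t$ (using \eqref{E_n_delta} and $f=X_0\gamma$), while, independently of $\bn_{\d}$, the successive event types are i.i.d.\ with law $p_i(\ux)$ from \eqref{probabilities}, so that $\bX_{\d,j}=\sum_{k=1}^{\bn_{\d}}\Delta X_j(\mathrm{type}_k)$ with i.i.d.\ (hence bounded) summands of common mean $\nu_j(\ux)$ (cf.\ \eqref{expt}). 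By \eqref{propdeltaTT} and \eqref{propdeltat}, $E(\bn_{\d})\asymp d^{2}\cdot d^{-1}=d\to\infty$, so Chebyshev's inequality (using $\mathrm{Var}(\bn_{\d})=E(\bn_{\d})$) gives $\bn_{\d}/E(\bn_{\d})\to1$ in probability, and then the weak law of large numbers together with Slutsky's theorem yields, for each $j=0,1,\dots,J$,
\[
\frac{\bX_{\d,j}}{E(\bn_{\d})}\xrightarrow{\mathbb{P}}\nu_j(\ux),\qquad\text{equivalently}\qquad \bX_{\d,j}=X_0(t)\,\gamma(\ux)\,\d t\,(\nu_j(\ux)+o_{\mathbb{P}}(1)).
\]

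Third, everything is assembled. Since $\bX_{\d,0}=O_{\mathbb{P}}(E(\bn_{\d}))=O_{\mathbb{P}}(d)$ whereas $X_0(t)\asymp d^{2}$, we get $\bX_{\d,0}/X_0(t)=O_{\mathbb{P}}(d^{-1})\to0$, so the denominator in the identity of the first paragraph equals $X_0(t)(1+o_{\mathbb{P}}(1))$. Substituting the expression for $\bX_{\d,j}$ into that identity, dividing both numerator and denominator by $X_0(t)\,\d t$, and using $\gamma\asymp1$,
\[
\frac{\bx_j(t+\d t)-x_j(t)}{\d t}=\frac{\gamma(\ux)(\nu_j(\ux)-x_j\nu_0(\ux))+o_{\mathbb{P}}(1)}{1+o_{\mathbb{P}}(1)}\xrightarrow{\mathbb{P}}\gamma(\ux)\,\nu_0(\ux)\,(\hat x_j(\ux)-x_j),
\]
which is exactly \eqref{ode1}.

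The genuinely delicate point — the rest being bookkeeping with Slutsky's theorem — is the frozen-coefficient step. The clock numbers $C_i(\uX)$, hence the aggregate rate $f=\sum_iC_i\lambda_i$ and the probabilities $p_i$, depend on the state, which itself moves during $[t,t+\d t]$; strictly speaking the arrivals form a state-dependent, inhomogeneous point process rather than a homogeneous Poisson process. The resolution is that over this interval the state changes only by $\bX_{\d,j}=O_{\mathbb{P}}(d)$ while $X_j\asymp d^{2}$, so on the overwhelmingly likely event that the trajectory stays in the region where the $\asymp d^{2}$ bounds hold, every $C_i$, every $\gamma_i$ and every $p_i$ varies by a relative amount $O(d^{-1})$ uniformly on $[t,t+\d t]$. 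One then sandwiches the true process between two homogeneous Poisson processes whose intensities differ from $f(\uX(t))$ by a factor $1+O(d^{-1})$, transfers the sandwich to $\bn_{\d}$ and to the empirical type frequencies $\bn_{\d,i}/\bn_{\d}$, and checks that replacing the true rates by the frozen ones perturbs the normalized quantities $\bX_{\d,j}/E(\bn_{\d})$ only by $o_{\mathbb{P}}(1)$. Making this coupling precise, and controlling the small-probability event on which the trajectory leaves the admissible regime, is where the real work lies.
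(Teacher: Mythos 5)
Your proposal is correct in outline and follows essentially the same skeleton as the paper's proof: the same exact identity $\bx_j(t+\d t)-x_j(t)=\bigl(\bX_{\d,j}-x_j\bX_{\d,0}\bigr)/\bigl(X_0+\bX_{\d,0}\bigr)$, the same reduction to the single concentration statement $\bX_{\d,j}/E(\bn_{\d})\Plim \nu_j$ (which the paper isolates as Lemma~\ref{lem:2}), and the same Chebyshev-for-Poisson mechanism combined with $E(\bn_{\d})\asymp d\to\infty$. Where you genuinely diverge is in how that concentration is justified despite the state-dependence of the rates. The paper writes $\bX_{\d,j}=\bxi+\bta$, where $\bxi=\sum_i \Delta X_j(i)\,\bxi_i$ with $\bxi_i$ Poisson of parameter $f_i\,\d t$ obtained by summing the leading-order Poisson counts of the \emph{individual} clocks, and $\bta$ collects the two discrepancies (clocks that would fire a second time after having disappeared, and clocks created during $[t,t+\d t]$), each bounded \emph{in expectation} by $O(d^2(\d t)^2)=O(1)=o(E(\bn_{\d}))$, after which Markov's inequality finishes. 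You instead use the marked-Poisson representation (a Poisson number of events with i.i.d.\ types of law $p_i$) plus a proposed sandwich between two homogeneous Poisson processes with intensities $f\cdot(1\pm O(d^{-1}))$. Both devices can be made to work, but note two things: your sandwich is stated as a plan rather than executed, and your justification that the rates drift only by a relative $O(d^{-1})$ rests on $\bX_{\d,j}=O_{\mathbb{P}}(d)$, which itself presupposes control on the rates --- you need a crude a priori domination (e.g.\ by a Poisson process of intensity $2f$) to break that circularity. The paper's per-clock expectation bound avoids the coupling entirely and is the more economical route; since the paper's own treatment of $\bta$ is at a comparable level of detail to your sketch, I would not call this a gap, but if you want a complete argument the explicit $O(d^2(\d t)^2)$ bound is the step to supply.
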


\begin{rem}\label{r:ode_deriv}
Based on Theorem~\ref{lem:1} we propose the ODE model \eqref{ode1} to describe the evolution of the mosaic in the limit of size $d\to\infty$. However,
the statement of Theorem~\ref{lem:1} does not literally imply that the random trajectory $\{\ubx(t), t\ge 0\}$ converges to the deterministic trajectory $\{\ux(t), t\ge 0\}$.
We consider this as an interesting open problem. \\
Note that a standard consequence of the autonomous system \eqref{ode1} is that the variables $x_j$ and $x_{j'}$ satisfy
\begin{equation}\label{ode3}
    \frac{dx_{j}}{dx_{j'}}=\frac{x_j-\hat x_j(\ux)}{x_{j'}- \hat x_{j'}(\ux)}, \quad (j=1,2, \dots J; \ j'=1,2, \dots J).
\end{equation}
\end{rem}
\begin{rem}
Before we start the proof of Theorem~\ref{lem:1} we point out that every term in equation (\ref{ode1})
can be computed from the
fundamental table, given in Definition \ref{def:fund}: the cumulative clock function
$\gamma (\ux)$ is, using Definition \ref{clockfunction} computed from the individual clock functions $\gamma_i(\ux)$ which,
in turn, are computed via (\ref{clockfunction0}) from the clock numbers $C_i(\uX)$ and the intensities (individual clock frequencies) $\lambda_i$.
Both of the latter quantities are entries in the fundamental table. The values of $\nu_j(\ux)=E(\d \bX_j)(\ux)$
are computed via (\ref{expt}) from the last three columns of the fundamental table containing the entries $\Delta X_j(i)$
and the probabilities $p_i(\ux)$ which, in turn, are computed from the clock functions via (\ref{probabilities}).
\end{rem}

\begin{proof}

The core idea behind the proof is a as follows: in the limit of \eqref{lim1}, the total number of relevant micro-events tends to infinity. These are generated independently by exponential clocks. This leads to the analysis of the properties of some associated random variables, which are Poisson distributed.

Note that the number of micro-events $\bn_{\d}$ is actually random. Nonetheless, from equations (\ref{propdeltaTT}),(\ref{propdeltat}) and \eqref{E_n_delta} we may conclude that
\begin{equation}\label{ndelta}
 E(\bn_{\Delta})=X_0 \cdot \gamma \cdot \d t  \asymp d, \quad \text{thus in particular}  \lim _{\rho \to \infty} E(\bn_{\Delta}) = \infty.
\end{equation}
The quantity $E(\bn_{\Delta})$ will play a central role in our computations.

A key ingredient of the proof is the following Lemma:
\begin{lem}\label{lem:2}
For any $j=0,\dots,J$ we have, in the limit of \eqref{lim1} that
\[
\frac{\bX_{\d,j}}{E(\bn_{\d})} \Plim E(\d \bX_j)=\nu_j
\]
where $\Plim$ denotes convergence in probability.
\end{lem}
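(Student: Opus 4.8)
The plan is to establish Lemma~\ref{lem:2} by decomposing the randomness in $\bX_{\d,j}$ into two independent sources — the random number $\bn_\d$ of micro-events in $[t,t+\d t]$, and the random types of those micro-events — and controlling each via a law-of-large-numbers estimate. Write $\bX_{\d,j}=\sum_{k=1}^{\bn_\d}(\Delta\bX_j)_k$ as in \eqref{general_deltas}. Since the type of each micro-event is chosen i.i.d.\ with probabilities $p_i(\ux)$ (the state $\ux$ being frozen at its time-$t$ value in this infinitesimal step), the increments $(\Delta\bX_j)_k$ are i.i.d.\ bounded random variables with common mean $\nu_j=E(\Delta\bX_j)$ given by \eqref{expt}, and $\bn_\d$ is independent of them (it is a function of the interarrival times only, whereas the $(\Delta\bX_j)_k$ are functions of the independent type-selection variables). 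Boundedness is automatic: the increments $\Delta X_j(i)$ are finitely many deterministic entries of the fundamental table.

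First I would handle $\bn_\d$. By \eqref{E_n_delta}, \eqref{propdeltaTT} and \eqref{propdeltat} we have $E(\bn_\d)=X_0\gamma\,\d t\asymp d\to\infty$ as in \eqref{ndelta}. The key structural fact, flagged in the footnotes, is that in this asymptotic regime $\{\btau_k\}$ is (asymptotically) a homogeneous Poisson process of intensity $f\asymp d^2$, so $\bn_\d$ is (asymptotically) Poisson with parameter $f\,\d t=E(\bn_\d)$. A Poisson variable with diverging mean concentrates: $\mathrm{Var}(\bn_\d)=E(\bn_\d)$, so by Chebyshev $\bn_\d/E(\bn_\d)\Plim 1$. (If one prefers to avoid invoking the Poisson approximation directly, the same conclusion follows from the renewal-theoretic statement that $\bn_\d$ is a renewal counting process evaluated at $\d t$ with i.i.d.\ exponential interarrival times of mean $E(\d\bT)$, together with $\d t/E(\d\bT)\to\infty$; the elementary renewal estimate gives $\bn_\d/E(\bn_\d)\Plim 1$.)

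Next I would handle the i.i.d.\ sum. Let $S_m=\sum_{k=1}^m(\Delta\bX_j)_k$; the weak law of large numbers gives $S_m/m\to\nu_j$ in probability as $m\to\infty$, and because the summands are uniformly bounded this can be made quantitative via Chebyshev: $P(|S_m/m-\nu_j|>\varepsilon)\le \sigma_j^2/(m\varepsilon^2)$ with $\sigma_j^2=\mathrm{Var}(\Delta\bX_j)$ fixed. The final step is to combine the two: write
\[
\frac{\bX_{\d,j}}{E(\bn_\d)}=\frac{S_{\bn_\d}}{\bn_\d}\cdot\frac{\bn_\d}{E(\bn_\d)},
\]
and on the event $\{\bn_\d\ge \tfrac12 E(\bn_\d)\}$ (which has probability $\to 1$) apply the bound for $S_m/m$ with $m=\bn_\d\ge\tfrac12 E(\bn_\d)\to\infty$ uniformly, while $\bn_\d/E(\bn_\d)\Plim 1$ from the previous paragraph; Slutsky's lemma (products of a sequence converging in probability to a constant with another converging in probability to a constant) then yields $\bX_{\d,j}/E(\bn_\d)\Plim \nu_j\cdot 1=\nu_j$. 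A clean way to package the ``random index'' step is to condition on $\bn_\d$: $P(|S_{\bn_\d}/\bn_\d-\nu_j|>\varepsilon)\le P(\bn_\d<\tfrac12 E(\bn_\d))+ \sigma_j^2/(\tfrac12 E(\bn_\d)\varepsilon^2)\to 0$.

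The main obstacle I anticipate is not the probabilistic core — which is a textbook random-sum LLN — but making precise the sense in which $\bn_\d$ is genuinely Poisson-distributed (equivalently, in which the interarrival times are genuinely i.i.d.\ exponential with a \emph{fixed} parameter) over the whole interval $[t,t+\d t]$. In reality the clock intensities $\gamma_i(\ux)$ depend on the state, which itself changes at each micro-event, so the process is only approximately a homogeneous Poisson process; the justification is that over a time-step $\d t\asymp d^{-1}\to 0$ the state changes by $O(\bn_\d/X_0)=O(d/d^2)=O(d^{-1})\to 0$, so the intensities are asymptotically constant on the interval and the homogeneous Poisson / i.i.d.-exponential picture is recovered in the limit \eqref{lim1}. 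I would state this approximation explicitly (perhaps as a preliminary reduction) and then carry out the two Chebyshev estimates and the Slutsky combination as above; the boundedness of the increments is what makes every error term quantitatively controllable.
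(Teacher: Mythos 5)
Your proof is correct in substance but organizes the randomness differently from the paper. You decompose $\bX_{\d,j}$ as a random sum over the event index, $S_{\bn_\d}=\sum_{k=1}^{\bn_\d}(\Delta\bX_j)_k$, with i.i.d.\ type selection of probabilities $p_i$ independent of the counting variable $\bn_\d$, and then combine a Chebyshev concentration for $\bn_\d/E(\bn_\d)$ with a random-index weak law for $S_m/m$. The paper instead decomposes by \emph{clock type}: it writes $\bX_{\d}=\sum_i \d X(i)\,\bxi_i+\bta$, where $\bxi_i\sim \mathop{POI}(f_i\,\d t)$ arises as a sum over the $C_i$ individual clocks of label $(i,m)$, and then only needs the single elementary fact that $\bW_\lambda/\lambda\Plim 1$ for Poisson variables with $\lambda\to\infty$; the quotient $\bxi_i/E(\bn_\d)\Plim \gamma_i/\gamma=p_i$ then gives $\nu_j$ directly, with no need for the i.i.d.-types-conditional-on-count structure or for Slutsky. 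The two routes are equivalent for the idealized superposed homogeneous Poisson process (where the type of the $k$th arrival is indeed i.i.d.\ and independent of the arrival times), and your error orders match the paper's: your $O(d^{-1})$ relative drift of the intensities corresponds to the paper's $E(|\bta|)=O(d^2(\d t)^2)=O(1)$ against $E(\bn_\d)\asymp d$. The one place where the paper is more concrete than your sketch is the content of the correction term: it is not only a smooth drift of intensities but two discrete phenomena --- a clock may disappear after its first firing (so only the event $\{\bxi_{i,m}\ge 2\}$ is unreliable, contributing $O((\d t)^2)$ per clock), and new clocks may be created mid-interval (bounded by partitioning the balanced mosaic into $O(d^2)$ bounded pieces). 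If you flesh out your ``preliminary reduction'' you will need an argument of exactly this kind, since ``asymptotically constant intensities'' does not by itself account for the appearance and disappearance of individual clocks; once that is supplied, your Chebyshev--Slutsky combination closes the proof just as well as the paper's type-wise Poisson decomposition.
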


Let us first complete the proof of Theorem~\ref{lem:1} assuming Lemma~\ref{lem:2}.
For the finite time increment $\Delta t$  of $x_j$ we write:
\begin{align*}
    \bx_j(t+\Delta t)-x_j(t)&=\frac{\bX_j(t+\Delta t)}{\bX_0(t+\Delta t)}-\frac{X_j(t)}{X_0(t)}=\frac{X_0(X_j+\bX_{\Delta, j})-X_j(X_0+\bX_{\Delta, j})}{X_0(X_0+\bX_{\Delta,0})}=\\
    &=\frac{\bX_{\Delta, j}-x_j\bX_{\Delta,0}}{X_0+\bX_{\Delta,0}}.
\end{align*}
Next, after division by $\d t$, and using \eqref{E_n_delta}, we get
\begin{align}
\frac{\bx_j(t+\Delta t)-x_j(t)}{\Delta t} &= \frac{\bX_{\Delta,j}-x_j\bX_{\Delta,0}}{E(\Delta \bT)E(\bn_{\d})(X_0+\bX_{\Delta,0})}= \nonumber\\
&=\frac{\dfrac{\bX_{\Delta, j}}{E(\bn_{\d})}-x_j\dfrac{\bX_{\Delta, 0}}{E(\bn_{\d})}}{X_0 E(\Delta \bT) + \dfrac{\bX_{\Delta, 0}}{E(\bn_{\d})} \d t} \label{dx_j_over_dt}
\end{align}
Now in the limit of $\rho\to\infty$ (that is, in the sense of \eqref{lim1}) the numerator of the expression \eqref{dx_j_over_dt} converges in probability, by repeated application of Lemma~\ref{lem:2}, to
\[
\nu_j - x_j \nu_0.
\]
The first term in the denominator is equal to $(\gamma)^{-1}$ by Definition~\ref{clockfunction}, while we may again apply Lemma~\ref{lem:2} to conclude that the second term in the denominator converges to $0$ in probability. This then implies
\[
\frac{\bx_j(t+\Delta t)-x_j(t)}{\Delta t} \quad \Plim\quad  \frac{\nu_j - x_j \nu_0}{(\gamma)^{-1}} = \gamma(\ux) \cdot \nu_0(\ux)\cdot (\hat{x}_j (\ux)-x_j)
\]
which completes the proof of Theorem~\ref{lem:1}.
It remains to prove Lemma~\ref{lem:2}.

\begin{pfof}{Lemma~\ref{lem:2}}

Throughout this proof, we will keep the index $j=0,\dots ,J$ fixed. Hence, for brevity of notation, we drop this index and write just $x,X,\nu,\bx,\bX, \bX_{\d}$ instead of $x_j,X_j,\nu_j,\bx_j,\bX_j, \bX_{\d,j}$, respectively.

Given some parameter $\lambda>0$, we will use the notation $\mathbf{W} \sim \mathop{POI}(\lambda)$ to express that the random variable $\mathbf{W}$ is Poisson distributed with parameter $\lambda$. Recall, furthermore, the notations from Assumption~\ref{ass:3}. The proof of Lemma~\ref{lem:2} relies on the following Claim:
\begin{claim}\label{claim:Poi}
We have
\[
\bX_{\d}=\bxi+\bta
\]
where
\[
\bxi=\sum_{i=0}^I \d X(i) \bxi_i, \quad \text{with} \quad  \bxi_i \sim \mathop{POI}(f_i \cdot \d t)
\]
while $\bta$ is some random variable such that
\[
\frac{\bta}{E(\bn_{\d})} \ \Plim 0
\]
in the limit specified by \eqref{lim1}.
\end{claim}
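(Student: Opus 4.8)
The plan is to prove Claim~\ref{claim:Poi} by decomposing the micro-event stream by type, comparing the true (state-dependent) stream against the stream one would get by \emph{freezing} all clock numbers at their deterministic time-$t$ values $C_i(\uX(t))$, and showing that the discrepancy between the two is negligible on the scale $E(\bn_{\d})$. Grouping the increments $(\d\bX)_k=\d X(i_k)$ by the type $i_k$ of the $k$-th event, we may write $\bX_{\d}=\sum_{i=0}^{I}\d X(i)\,\bn_{\d,i}$, where $\bn_{\d,i}$ is the number of type-$R_i$ micro-events in $[t,t+\d t]$; it therefore suffices to exhibit, for each $i$, a splitting $\bn_{\d,i}=\bxi_i+\bta_i$ with $\bxi_i\sim\mathop{POI}(f_i\,\d t)$ and $\bta_i/E(\bn_{\d})\Plim 0$, and then set $\bta=\sum_i\d X(i)\,\bta_i$. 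Since $f_i=C_i(\uX(t))\lambda_i$ is evaluated at the \emph{deterministic} time-$t$ state, the $\bxi_i$ are genuinely Poisson, as the Claim requires.

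First I would localize the state. Each micro-event shifts every $X_j$ by one of the finitely many bounded constants $\d X_j(i)$, and each $C_i$ is linear in $\uX$ by \eqref{clocknumber}; hence after $n$ micro-events the total signalling rate $f(s)=\sum_iC_i(\uX(s))\lambda_i$ has moved from $f(t)\asymp d^2$ by at most $O(n)$. To turn this into a usable a priori bound I would run a stopping-time bootstrap: let $\sigma\in[t,t+\d t]$ be the first time $f$ exceeds $2f(t)$ ($\sigma=t+\d t$ if this never happens). Before $\sigma$ the total rate is $\le 2f(t)$, so the number of events up to $\sigma$ is stochastically dominated by a $\mathop{POI}(2f(t)\,\d t)$ variable; by \eqref{E_n_delta}, \eqref{propdeltaTT} and \eqref{propdeltat} its mean is $2E(\bn_{\d})\asymp d\to\infty$, so by a Poisson tail bound it is $\le 3E(\bn_{\d})$ with probability tending to $1$. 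On that event the state — hence every $C_i(\uX(s))\lambda_i$ — has moved by only $O(d)=o(d^2)$ up to time $\sigma$, which forces $\sigma=t+\d t$. Call this good event $G$; on $G$ one has simultaneously $\bn_{\d}\le 3E(\bn_{\d})$ and $\sup_{s\in[t,t+\d t]}|C_i(\uX(s))\lambda_i-f_i|\le c_i\bn_{\d}\asymp d$ for every $i$, and $P(G)\to1$.

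I would then realize both streams by thinning one homogeneous Poisson process $\Pi$ on $[t,t+\d t]$ of rate $\bar f:=2f(t)$, attaching to its $k$-th point $s_k$ an independent uniform mark $U_k\in[0,1]$. Chop $[0,1]$ into consecutive disjoint intervals $I_0,\dots,I_I$ of lengths $f_0/\bar f,\dots,f_I/\bar f$ (which sum to $1/2$), and declare $s_k$ a \emph{frozen} type-$i$ point when $U_k\in I_i$; by the thinning (colouring) theorem this produces independent $\bxi_i\sim\mathop{POI}(f_i\,\d t)$, exactly the distributional requirement of the Claim. Declaring instead $s_k$ a \emph{true} type-$i$ point when $U_k$ falls in the analogous sub-interval $J_i(s_k)$ of length $C_i(\uX(s_k-))\lambda_i/\bar f$ (ordered the same way) gives the type-$R_i$ counting process its correct predictable intensity $C_i(\uX(s-))\lambda_i$, so $\bn_{\d,i}$ is faithfully represented — at least on $G$, where $\sum_iC_i(\uX(s-))\lambda_i\le\bar f$ makes the construction well defined (off $G$, of probability $o(1)$, any bounded choice of $\bta_i$ is immaterial). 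Setting $\bta_i:=\bn_{\d,i}-\bxi_i$, the two labels of $s_k$ disagree only when $U_k\in I_i\triangle J_i(s_k)$, an interval of length $\le 2\sum_{\ell\le i}|C_\ell(\uX(s_k-))\lambda_\ell-f_\ell|/\bar f$, which on $G$ is $O(\bn_{\d}/d^2)=O(d^{-1})$; since $\Pi$ has $\mathop{POI}(\bar f\,\d t)\asymp d$ points, $E(|\bta_i|\mathbf{1}_G\mid\Pi)=O(d\cdot d^{-1})=O(1)$, so $\bta_i/E(\bn_{\d})\Plim 0$ by Markov's inequality together with $E(\bn_{\d})\to\infty$ and $P(G)\to1$. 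Summing over $i$ proves the Claim; and then, since $E(\bxi_i)=f_i\,\d t$ and $\operatorname{Var}(\bxi_i)=f_i\,\d t=o(E(\bn_{\d})^2)$, Chebyshev gives $\bxi_i/E(\bn_{\d})\Plim p_i$, so $\bX_{\d}/E(\bn_{\d})=\bxi/E(\bn_{\d})+\bta/E(\bn_{\d})\Plim\sum_i\d X(i)\,p_i=\nu$, completing Lemma~\ref{lem:2}.

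The main obstacle is precisely the feedback loop the localization step is designed to break: the clock rates depend on the state, the state on how many events have fired, and the event count on the rates. Resolving this cleanly — without presupposing either boundedly many events or boundedly varying rates — is the technical heart of the argument; once the stopping-time bootstrap pins the state to within a relative $O(d^{-1})$ of $\uX(t)$ throughout $[t,t+\d t]$, the remainder is routine Poisson-thinning bookkeeping plus Markov's inequality. A secondary point demanding care is keeping the thinning representation of the true stream legitimate (acceptance probabilities $\le1$) exactly on the good event, and disposing of its complement, of probability $o(1)$, separately.
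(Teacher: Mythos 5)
Your argument is correct, and it reaches the Claim by a genuinely different route from the paper. The paper decomposes by the \emph{individual clocks} $(i,m)$ present at time $t$, asserts that each contributes a count $\bxi_{i,m}\sim \mathop{POI}(\lambda_i\,\d t)$ ``in leading order'', and then collects the error $\bta$ from two sources: (a) a clock label firing twice or more after the first firing may have destroyed it, controlled via $E(\bxi_{i,m}\mathbf{1}_{\{\bxi_{i,m}\ge 2\}})=O((\d t)^2)$ summed over $C_i\asymp d^2$ clocks, and (b) clocks created during $[t,t+\d t]$, controlled by partitioning the mosaic into $O(d^2)$ constant-size pieces and invoking balancedness; both give $E(|\bta|)=O(d^2(\d t)^2)$, hence $O(d(\d t)^2)\to 0$ after dividing by $E(\bn_{\d})$. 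You instead work at the level of aggregate type-$i$ intensities $C_i(\uX(s-))\lambda_i$ (a reformulation of the clock model that is equivalent by memorylessness of the exponentials --- worth stating explicitly, since it is exactly where the appearance/disappearance bookkeeping gets absorbed), establish an a priori bound on the rate drift via a stopping-time bootstrap, and then couple the true state-dependent stream to the frozen-rate stream by thinning a single dominating homogeneous Poisson process, bounding $\bta_i$ by the number of mark disagreements, $O(d\cdot d^{-1})=O(1)$ in expectation on the good event. The two error estimates are of the same order ($O(1)$ in expectation, versus $E(\bn_{\d})\asymp d$), but your coupling makes the feedback between state and rates explicit and rigorous where the paper's ``in leading order'' is heuristic, and it dispenses with the spatial partition and the balancedness hypothesis in favour of only the per-event boundedness of the increments and the linearity of $C_i$ in \eqref{clocknumber}. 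Your closing step (Chebyshev for $\bxi_i/E(\bn_{\d})\Plim p_i$) coincides with the paper's. Two cosmetic points: off the good event $G$ the quantity $\bta_i$ need not be bounded, but this is harmless since you only need $P(G^c)\to 0$ for convergence in probability; and you should say a word about why the stopped counting process up to $\sigma$ is stochastically dominated by $\mathop{POI}(2f(t)\,\d t)$ (standard, but it is the hinge of the bootstrap).
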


The claim is a consequence of the following observations. Recall that we consider the evolution of
\begin{itemize}
\item[-] a portion of the mosaic that has diameter $d$,
\item[-] within a time interval of length $\d t$.
\end{itemize}
The total increment $\bX_{\d}$ of $X$ arises as a sum of increments corresponding to micro-events of type $i$, with $i=0,\dots,I$. In course of a micro-event of type $i$, the increment of $X$ is $\d X(i)$. Hence it remains to analyze the total number of micro-events of type $i$. These arise from micro-events triggered by $C_i$ clocks, with $C_i\asymp d^2$. We may label these clocks by the pair $(i,m)$ with $i=0,\dots, I$ and $m=1,\dots,C_i$. As these individual clocks are giving signals according to an exponential distribution with parameter $\lambda_i$, \textit{in leading order}, the number of micro-events  triggered, within the infinitesimal time interval $\d t$, by the clock $(i,m)$ is Poisson distributed with parameter $\lambda_i \d t$. Now let
\[
\bxi_i=\sum_{m=1}^{C_i} \bxi_{i,m} \text{with} \quad  \bxi_{i,m} \sim \mathop{POI}(\lambda_i \cdot \d t);\ (m=1,\dots, C_i).
\]
Recalling that $f_i=C_i \lambda_i$, we conclude that $\bxi_i \sim \mathop{POI}(f_i \cdot \d t)$ as the sum of independent Poisson distributed random variables is Poisson distributed, with the parameters added.

The above description is accurate in leading order, however, there are discrepancies, the contribution of which is given by the random variable $\bta$. Below we argue that
\begin{equation}\label{bta_expect}
E(|\bta|)=O(d^2 \cdot (\d t)^2),
\end{equation}
which together with \eqref{ndelta} implies
\[
E\left(\frac{|\bta|}{E(\bn_{\d})} \right)=O(d \cdot (\d t)^2)\to 0
\]
in the limit \eqref{lim1}. The convergence in probability then follows by Markov's inequality.

To see \eqref{bta_expect}, the following phenomena have to be taken into account:
\begin{itemize}
\item For $i=0,\dots, I$ and $m=1,\dots, C_i$ fixed, we can only be certain that the variable $\bxi_{i,m}\sim \mathop{POI} (\lambda_i \d t)$ is equal to the number of micro-events triggered by the clock $(i,m)$ if $\bxi_{i,m}=1$. Indeed, after the micro-event has occurred, the geometry of the mosaic changes, and the clock of this label may disappear. Nonetheless (using the standard notation $\bf{1}_A$ for the indicator of the event $A$), we have
    \[
    E(\bxi_{i,m}\cdot \mathbf{1}_{\{\bxi_{i,m}\ge 2\}}) =O((\d t)^2),
    \]
    giving a total contribution to $\bta$ of expectation $O(d^2 (\d t)^2)$, as $C_i\asymp d^2$.
\item It may also happen that  a new clock, which is not present at time $t$, appears within the time interval $\d t$, and triggers a micro-event. To bound this effect, partition the mosaic into some pieces of constant size (say discs and/or polygons). The number of such pieces is $O(d^2)$. As the mosaic is balanced, the total number of clocks per piece is uniformly bounded. Accordingly, (i) the probability that within one piece a new clock is created is $O(\d t)$, (ii) the conditional expectation of the events triggered by the new clock is again $O(\d t)$, giving a contribution which is $O((\d t)^2)$ in expectation. Summing on the pieces we arrive at a total contribution to $\bta$ of expectation $O(d^2 \cdot (\d t)^2)$, as claimed.
\end{itemize}
 By Claim~\ref{claim:Poi}, to complete the proof of Lemma~\ref{lem:2} we only need to show that
 \begin{equation}\label{E_xi}
 \frac{\bxi}{E(\bn_{\d})} \ \Plim \ \nu=\sum_{i=0}^I \d X(i) p_i = \sum_{i=0}^I \d X(i) \frac{\gamma_i}{\gamma}
 \end{equation}
 in the limit of \eqref{lim1}. Note that $\bxi_i\sim \mathop{POI} (X_0 \gamma_i \d t)$. Also, from \eqref{ndelta}, $E(\bn_{\d}) =  X_0 \gamma \d t \to \infty$. To proceed, we need the following simple fact:
 \[
 \text{If } \bW_{\lambda}\sim \mathop{POI}(\lambda), \text{ then } \frac{\bW_{\lambda}}{\lambda} \Plim 1 \text{ as } \lambda\to \infty.
\]
Indeed, by Chebyshev's inequality
\[
\mathbb{P} \left(\left|\frac{W_{\lambda}}{\lambda}-1\right|>\varepsilon \right) = \mathbb{P} \left(\left|W_{\lambda}-\lambda\right|>\varepsilon \lambda \right) \le \frac{\lambda}{\varepsilon^2 \lambda^2} \to 0
\]
for any $\varepsilon>0$, as $\lambda\to\infty$. It follows that
\[
 \frac{\bxi_i}{E(\bn_{\d})} \ \Plim \ \dfrac{\gamma_i}{\gamma}=p_i
\]
which implies \eqref{E_xi}. This completes the proof of Lemma~\ref{lem:2};
\end{pfof} \\
and hence of Theorem~\ref{lem:1}. \end{proof}

The basic concepts  are illustrated in Figure \ref{fig:1} on an example related to the geophysical problem discussed in Sections \ref{sec:discmodel} and \ref{sec:return}.

\begin{rem}
Figure~\ref{fig:1} illustrates a particular realization of the random process in the time interval $[t,t+\d t]$ for the specific case of the model discussed in Section~\ref{sec:return}. In this case we have $J=1$, with additional notations $X_0=N^{\star}$ and $X_1=F$. Also, we have included the quantities
\begin{equation}\label{eq:realize}
\overline{\Delta \bX}_j=\frac{\bX_{\d,j}}{\bn_{\d}}; \qquad \text{specifically} \qquad \overline{\Delta \bF}=\frac{\bF_{\d}}{\bn_{\d}} \text{ and }
\overline{\Delta \bN}^{\star}=\frac{\bN_{\d}^{\star}}{\bn_{\d}};
\end{equation}
the empirical averages of the increments of $\bX_j$ within the time interval $[t,t+\d t]$. The reason for their inclusion is that these random quantities can be observed in the specific realizations of the process. Note that although both the numerator and the denominator are random in  \eqref{eq:realize}, Lemma~\ref{lem:2} implies that in the limit of \eqref{lim1} we have
\[
\overline{\Delta \bX}_j=\frac{\bX_{\d,j}}{\bn_{\d}} \Plim E(\d \bX_j)=\nu_j.
\]
\end{rem}

\begin{figure}[ht!]
\begin{center}
\includegraphics[width=1.1\textwidth]{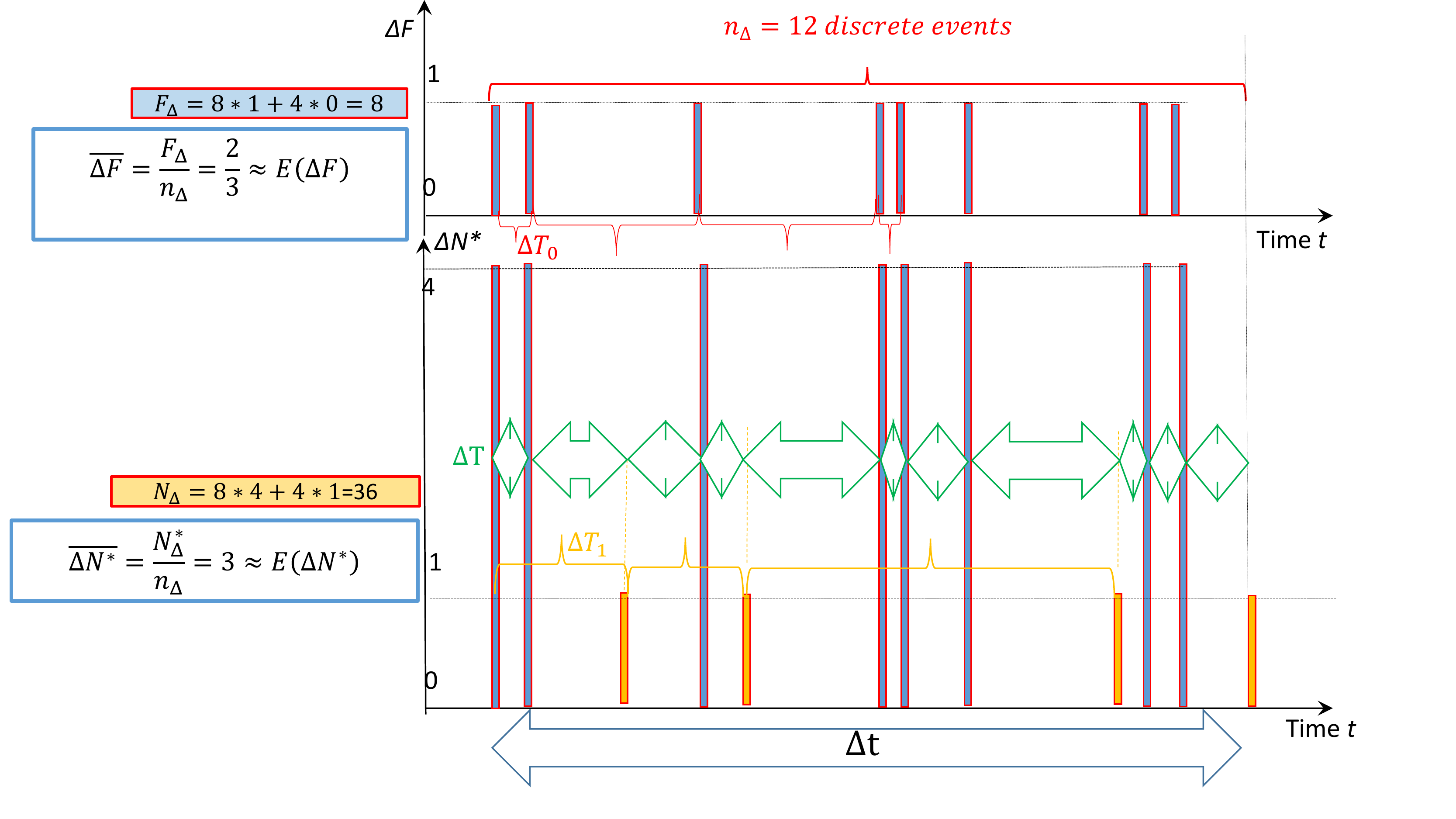}
\caption{Illustration of basic concepts. Instead of the general set of variables $X_i$ we show an example with the variables $N^{\star}\equiv X_1$ and $F \equiv X_2$ used in Sections \ref{sec:discmodel} and \ref{sec:return}. Event of type $R_0$ (with random interarrival time $\Delta T_0$, red color) increases the value of both variables. Event of type $R_1$ (with random interarrival time $\Delta T_1$, yellow color) only increases the value of $N^{\star}$. The cumulative process is characterized by the random interarrival time $\Delta T$ (green color). If both $R_0$ and $R_1$ are governed by an independent Poisson process then the cumulative process is also a Poisson process and the frequencies can be added. }\label{fig:1}
\end{center}
\end{figure}

\begin{defn}
\label{def_odelin}
The case when all of the quantities $\hat{x}_j$, $j=1,\dots,J$ are constant, that is, there exist constants $\hat{x}_j\in\Re^+$ such that
\[
\hat{x}_j(\ubx)\equiv \hat{x}_j,\quad \forall j=1,\dots,J; \text{ and }\forall \ubx\in\Re^J
\]
will be referred to as \textit{the linear case} of the ODE \eqref{ode1}.
\end{defn}

It is important to emphasize that the linear case is a very special, actually, degenerate case of \eqref{ode1}. Below in Section~\ref{ss:ode_lin} we discuss some key properties of the linear case, in particular Lemma~\ref{lem:straight}, which justifies the terminology ``linear''. Later in Section~\ref{sec:return} we revisit crack evolution networks and compare the linear and the nonlinear versions of the model in this particular situation.

\subsection{Discussion of the ODE in the linear case\label{ss:ode_lin}}

Throughout this subsection we consider the special linear case of the ODE \eqref{ode1} in the sense of Definition~\ref{def_odelin}.

\begin{lem}\label{lem:straight} Let us consider the linear case, that is, $\hat x_j =\frac{\nu_j}{\nu_0}=\frac{E(\Delta \bX_j)}{E(\Delta \bX_0)}$ $(j=0,1, \dots, J)$ are all constants. Then, for any choice $j_1,j_j \in \{1,2, \dots J\}; j_1 \not= j_2$ for the  solutions $x_{j_1}(t),x_{j_2}(t)$
of equations $j_1$ and $j_2$ in (\ref{ode1}) we have
\begin{equation}
    \frac{x_{j_1}(t)-\hat x_{j_1}}{x_{j_2}(t)-\hat x_{j_2}}=constant.
\end{equation}
\end{lem}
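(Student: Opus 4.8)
The plan is to exploit the product structure of the right-hand side of \eqref{ode1}: in the linear case the scalar factor multiplying $(\hat x_j-x_j)$ is \emph{the same} for every index $j$. First I would pass to the shifted variables $u_j(t):=x_j(t)-\hat x_j$. Since each $\hat x_j$ is a constant by Definition~\ref{def_odelin}, we have $\dot u_j=\dot x_j$, and substituting \eqref{ode1} gives
\[
\dot u_j(t)=-\,g(\ux(t))\,u_j(t),\qquad g(\ux):=\gamma(\ux)\,\nu_0(\ux).
\]
The key observation is that $g(\ux(t))$ is one and the same function of $t$ for all $j=1,\dots,J$, so every $u_j$ solves the same scalar linear non-autonomous ODE $\dot u=-g(\ux(t))\,u$, differing only in the initial value.

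From here there are two equivalent ways to conclude. One is to use the integrating factor: with $G(t):=\int_{t_0}^{t}g(\ux(s))\,ds$ (finite on the interval of existence, since $\gamma$ and $\nu_0$ are continuous in $\ux$ and $\ux(\cdot)$ is a $C^1$ solution) one obtains $u_j(t)=u_j(t_0)\,e^{-G(t)}$, hence
\[
\frac{x_{j_1}(t)-\hat x_{j_1}}{x_{j_2}(t)-\hat x_{j_2}}=\frac{u_{j_1}(t)}{u_{j_2}(t)}=\frac{u_{j_1}(t_0)}{u_{j_2}(t_0)},
\]
which is constant. The other, which avoids solving anything, is to differentiate the quotient: by the quotient rule and $\dot u_j=-g\,u_j$,
\[
\frac{d}{dt}\!\left(\frac{u_{j_1}}{u_{j_2}}\right)=\frac{\dot u_{j_1}u_{j_2}-u_{j_1}\dot u_{j_2}}{u_{j_2}^{2}}=\frac{(-g\,u_{j_1})u_{j_2}-u_{j_1}(-g\,u_{j_2})}{u_{j_2}^{2}}=0,
\]
so the ratio is locally constant. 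One may also read this off directly from \eqref{ode3}, which in the linear case becomes $dw/dz=w/z$ with $w=x_{j_1}-\hat x_{j_1}$, $z=x_{j_2}-\hat x_{j_2}$, integrating to $w=\mathrm{const}\cdot z$.

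There is essentially no hard step here; the only point worth a remark is the degenerate case $u_{j_2}(t_0)=0$, i.e.\ $x_{j_2}(t_0)=\hat x_{j_2}$. Because $e^{-G(t)}>0$, the function $u_{j_2}$ is then identically zero, the denominator vanishes for all $t$, and the statement should be read with the convention that such trajectories (confined to the hyperplane $x_{j_2}\equiv\hat x_{j_2}$) are excluded; in all other cases $u_{j_2}$ is nowhere zero and the quotient computation above applies verbatim. I expect the finished proof to be only a few lines once the substitution $u_j=x_j-\hat x_j$ has been made.
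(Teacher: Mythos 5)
Your proof is correct and follows essentially the same route as the paper: both rest on the observation that the scalar factor $\gamma(\ux)\nu_0(\ux)$ in \eqref{ode1} is the same for every $j$, so the shifted variables $x_j-\hat x_j$ all satisfy the same scalar linear ODE and their ratio has zero derivative (the paper phrases this via $\frac{d}{dt}\ln|x_j-\hat x_j|$ being independent of $j$). Your explicit handling of the degenerate case $x_{j_2}(t_0)=\hat x_{j_2}$ is a small improvement in rigor over the paper's logarithmic formulation, but the underlying argument is identical.
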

\begin{proof}
For constant $\hat x_j$, (\ref{ode3}) can be integrated, yielding $d/dt ( ln|x_j-\hat x_j| ) = - c\gamma(x_j) $.
From this we get for  any pair $j_1,j_2$, $j_1\not= j_2$:
$$ d/dt ( ln|(x_{j_1}(t)-\hat x_{j_1})/(x_{j_2}(t)-\hat x_{j_2})| )=0$$
and this gives the statement in the Lemma. We remark that in the $[x_{j_1},x_{j_2}]$ plane these
solutions appear as straight lines.
\end{proof}

\begin{lem}\label{lem:linfixedpoints}
 Consider again the linear case of \eqref{ode1}, that is, the values  $\hat x_j =\frac{E(\Delta \bX_j)}{E(\Delta \bX_0)}>0$ are constant, $j=1,\dots,J$.
 Then  equation (\ref{ode1}) has a global attractor at
 \[
 (x_1,\dots,x_J)=\ux=\hat{\ux}=(\hat x_1,\dots,\hat x_J).
 \]
\end{lem}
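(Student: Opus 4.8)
The plan is to exploit the structure of the ODE \eqref{ode1} in the linear case, where each right-hand side factors as a strictly positive function times the affine term $(\hat x_j - x_j)$. First I would observe that $\hat{\ux}=(\hat x_1,\dots,\hat x_J)$ is a fixed point: at $\ux=\hat{\ux}$ every factor $(\hat x_j - x_j)$ vanishes, so $dx_j/dt=0$ for all $j$. The substance of the claim is that this fixed point is globally attracting on the relevant domain (the positive orthant, or at least the closure of $Q$).

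The natural tool is a Lyapunov function. I would try $L(\ux)=\max_{j}|x_j-\hat x_j|$, or more conveniently a smooth surrogate such as $L(\ux)=\sum_j (x_j-\hat x_j)^2$. Along a solution, $\tfrac{d}{dt}L = \sum_j 2(x_j-\hat x_j)\,\dot x_j = -2\,\gamma(\ux)\,\nu_0(\ux)\sum_j (x_j-\hat x_j)^2 \le 0$, using \eqref{ode1} with $\hat x_j$ constant. Provided the scalar prefactor $\gamma(\ux)\,\nu_0(\ux)$ is bounded below by a positive constant (or at least stays positive and is not integrable-to-zero) along trajectories, $L$ is strictly decreasing away from $\hat{\ux}$, and LaSalle's invariance principle forces every trajectory to converge to the unique zero of $L$, namely $\hat{\ux}$. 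Alternatively, and perhaps more cleanly, I would use Lemma~\ref{lem:straight}: since $(x_{j_1}(t)-\hat x_{j_1})/(x_{j_2}(t)-\hat x_{j_2})$ is constant along each trajectory, the motion is confined to a line through $\hat{\ux}$, so the problem reduces to the one-dimensional statement that on each such line $\hat{\ux}$ is globally attracting; there the equation reads $\dot s = -c\,\gamma\,s$ in a coordinate $s$ measuring signed distance to $\hat{\ux}$, and $\gamma>0$ gives $s\to 0$.

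The key step I expect to be the main obstacle is controlling the prefactor $\gamma(\ux)\,\nu_0(\ux)$: one needs that it does not degenerate to zero (which could stall a trajectory before it reaches $\hat{\ux}$) and that $\nu_0(\ux)>0$ so the sign of the driving term is unambiguous. In the crack-network setting $\gamma$ is a cumulative clock function built from positive clock intensities and positive clock densities, so $\gamma(\ux)>0$ throughout the domain; and $\nu_0=E(\Delta \bX_0)$ is the expected increment of $N^\star$, which is positive since every micro-event increases $N^\star$. I would state these positivity facts explicitly (they follow from Assumption~\ref{ass:3} and the fundamental table), note that the relevant domain is forward-invariant and bounded, and conclude that $t\mapsto \int_0^t \gamma(\ux(s))\nu_0(\ux(s))\,ds$ diverges, which is exactly what is needed to push $s(t)\to 0$ and hence $\ux(t)\to\hat{\ux}$.
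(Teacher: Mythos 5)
Your proposal is correct, and it is substantially more complete than the paper's own proof, which consists of just two sentences: $\gamma(\ux)>0$ by Definition~\ref{clockfunction}, and since $\hat{\ux}$ is the only fixed point it must be a global attractor. That argument, as written, is really an assertion rather than a proof: uniqueness of a fixed point plus positivity of one factor in the vector field does not by itself yield global attraction. Your Lyapunov computation $\frac{d}{dt}L=-2\,\gamma(\ux)\nu_0(\ux)\,L$ with $L(\ux)=\sum_j(x_j-\hat x_j)^2$ supplies the missing mechanism, and your alternative route through Lemma~\ref{lem:straight} (confine the motion to the invariant ray through $\hat{\ux}$ and reduce to a scalar equation $\dot s=-c\gamma s$) matches the spirit of the surrounding text. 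Most importantly, you correctly identify the genuine subtlety that the paper glosses over: convergence requires $\int_0^\infty \gamma(\ux(s))\,\nu_0(\ux(s))\,ds=\infty$, i.e.\ the scalar prefactor must not decay to zero too fast along a trajectory, and one also needs $\nu_0(\ux)>0$ (the paper's proof mentions only $\gamma>0$; positivity of $\nu_0$ is implicit in the hypothesis that the ratios $\hat x_j=\nu_j/\nu_0$ are well-defined positive constants, and in the concrete crack model it holds because every micro-event increases $N^{\star}$). To make this fully rigorous in the general setting of the lemma you would indeed need to verify forward invariance and boundedness of the relevant domain so that the prefactor is bounded below on a compact set — the paper does this only later, and only for the specific nonlinear model (Claim~\ref{c:boundary}). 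So your proof fills a real gap; the only caveat is that the lower bound on $\gamma\nu_0$ is an additional hypothesis not stated in the lemma itself, and you are right to flag it as the main obstacle rather than assume it silently.
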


\begin{proof}
Based on Definition \ref{clockfunction} we have $\gamma(\ux)>0$. Since the only fixed point is $x_j=\hat x_j$,
this guarantees that it is a global attractor.
\end{proof}

Our next goal is to prove:
\begin{lem}
We call $\rho_{j_1,j_2}(t)=\frac{x_{j_1}(t)}{x_{j_2}(t)}$ the \emph{$(j_1,j_2)$-density} of a mosaic and we call  $\phi_{j_1,j_2} (t)= \frac{\nu_{j_1}}{\nu_{j_2}}=\frac{E(\Delta \bX_{j_1})}{E(\Delta \bX_{j_2})}$ the \emph{$(j_1,j_2)$-fixed point ratio}.  We claim that in the linear case of (\ref{ode1}), when in particular  $\phi_{j_1,j_2}$ is constant, we have that $\rho_{j_1,j_2}(t)$ changes monotonically in the time variable $t$.
\end{lem}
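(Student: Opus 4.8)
The plan is to convert the claim into a statement about a scalar first-order ODE satisfied by $\rho_{j_1,j_2}(t)$, and to read off monotonicity from the sign of its right-hand side. Write $\rho:=\rho_{j_1,j_2}$, $\phi:=\phi_{j_1,j_2}$, and set $g(\ux):=\gamma(\ux)\,\nu_0(\ux)$, so that (\ref{ode1}) reads $\dot x_j=g(\ux)\,(\hat x_j-x_j)$; in the linear case the $\hat x_j$ are positive constants (Definition~\ref{def_odelin}) and, since $\hat x_j=\nu_j/\nu_0$, one has $\hat x_{j_1}/\hat x_{j_2}=\nu_{j_1}/\nu_{j_2}=\phi$. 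First I would differentiate $\rho=x_{j_1}/x_{j_2}$ along a solution of (\ref{ode1}): by the quotient rule the numerator is $g(\ux)\bigl[(\hat x_{j_1}-x_{j_1})x_{j_2}-x_{j_1}(\hat x_{j_2}-x_{j_2})\bigr]$, in which the two occurrences of $x_{j_1}x_{j_2}$ cancel, leaving
\[
\dot\rho\;=\;\frac{g(\ux)}{x_{j_2}^{2}}\bigl(\hat x_{j_1}x_{j_2}-\hat x_{j_2}x_{j_1}\bigr)\;=\;\frac{g(\ux)\,\hat x_{j_2}}{x_{j_2}}\,\bigl(\phi-\rho\bigr).
\]
Thus $\rho$ obeys the linear equation $\dot\rho=h(t)\,(\phi-\rho)$ with the (a priori time-dependent) coefficient $h(t)=g(\ux(t))\,\hat x_{j_2}/x_{j_2}(t)$.

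Next I would check that $h$ keeps a fixed sign along the trajectory in question: $\gamma(\ux)>0$ by Definition~\ref{clockfunction}, $\hat x_{j_2}>0$ in the linear case, $x_{j_2}(t)>0$ for physical solutions, and $\nu_0(\ux)>0$ in the settings of interest (for instance $\nu_0=E(\Delta N^\star)>0$ in the crack model of Section~\ref{sec:return}), so $h(t)>0$. Consequently $u(t):=\rho(t)-\phi$ solves the homogeneous linear ODE $\dot u=-h(t)\,u$, whence $u(t)=u(0)\exp\!\bigl(-\int_0^t h(s)\,ds\bigr)$. Hence $u(t)$ never changes sign: if $\rho(0)=\phi$ then $\rho\equiv\phi$ is constant, and otherwise $\rho(t)-\phi$ keeps the sign of $\rho(0)-\phi$ for all $t$, so $\dot\rho=h(t)(\phi-\rho)$ keeps the sign of $\phi-\rho(0)$. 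Therefore $\rho_{j_1,j_2}$ is monotone (strictly, unless it is constant); moreover $|\rho(t)-\phi|$ is nonincreasing, which is consistent with Lemma~\ref{lem:linfixedpoints}.

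An equivalent but perhaps more transparent route reuses the integration performed in the proof of Lemma~\ref{lem:straight}: there $x_j(t)-\hat x_j=(x_j(0)-\hat x_j)\,w(t)$ with $w(t)=\exp\!\bigl(-\int_0^t g(\ux(s))\,ds\bigr)$, and $w$ is strictly monotone in $t$ (since $g>0$). Substituting this into $\rho=x_{j_1}/x_{j_2}$ exhibits $\rho$ as a M\"obius function $w\mapsto\frac{(x_{j_1}(0)-\hat x_{j_1})w+\hat x_{j_1}}{(x_{j_2}(0)-\hat x_{j_2})w+\hat x_{j_2}}$ of $w$, which is monotone on any $w$-interval on which the denominator $x_{j_2}(t)$ does not vanish; a composition of monotone functions is monotone. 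In either approach the only point that genuinely needs care — and which I regard as the main (though mild) obstacle — is the sign bookkeeping for the prefactor $h$: one must know that $x_{j_2}(t)$ stays away from $0$ along the trajectory under consideration and that $\nu_0$ (equivalently $g$) does not change sign. Once these are granted, everything else is immediate and purely algebraic.
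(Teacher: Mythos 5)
Your proof is correct, and its core computation coincides with the paper's: differentiating $\rho=x_{j_1}/x_{j_2}$ along \eqref{ode1} and observing that, up to a positive prefactor, $\dot\rho$ has the sign of $\hat x_{j_1}x_{j_2}-\hat x_{j_2}x_{j_1}$ (this is exactly the paper's equation \eqref{sign}). Where you diverge is in how you show this sign is constant in $t$. The paper interprets $\hat x_{j_1}x_{j_2}-\hat x_{j_2}x_{j_1}$ as the signed area spanned by $(\hat x_{j_1},\hat x_{j_2})$ and $(x_{j_1},x_{j_2})$ and then invokes Lemma~\ref{lem:straight}: since the trajectory is a straight half-line emanating from $(\hat x_{j_1},\hat x_{j_2})$, this area cannot change sign. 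You instead factor $\hat x_{j_1}x_{j_2}-\hat x_{j_2}x_{j_1}=\hat x_{j_2}x_{j_2}(\phi-\rho)$ and close the loop with a scalar argument: $u=\rho-\phi$ satisfies the homogeneous linear equation $\dot u=-h(t)u$ and therefore never changes sign, so $\dot\rho=-h(t)u$ keeps a fixed sign once $h$ does. Your route is self-contained (it does not need Lemma~\ref{lem:straight}, although your second variant shows the two are interchangeable) and yields the extra conclusion that $|\rho-\phi|$ is nonincreasing; the paper's route is shorter because it outsources the sign-constancy to the already-proved geometric fact. You are also more explicit than the paper about the positivity bookkeeping ($\gamma>0$ from Definition~\ref{clockfunction}, $\nu_0>0$, $x_{j_2}>0$ on physical trajectories, $\hat x_{j_2}>0$ in the linear case of Definition~\ref{def_odelin}); the paper needs the same facts implicitly for the sign identity \eqref{sign} to hold, so flagging them is a genuine, if mild, improvement in rigor rather than an extra hypothesis.
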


\begin{proof}
To find out about monotonicity we need to compute
\begin{equation}
   sign\left( \frac{d}{dt}\rho_{j_1,j_2}\right) = sign \left( \frac{d}{dt}\frac{x_{j_1}(t)}{x_{j_2}(t)}\right).
\end{equation}
Based on (\ref{ode1}) we find
\begin{equation}\label{sign}
   sign\left( \frac{d}{dt}\rho_{j_1,j_2}\right) = sign(\hat x_{j_1}x_{j_2}-\hat x_{j_2}x_{j_1}),
\end{equation}
which is identical to the sign of the area spanned by the two planar vectors $(\hat{x}_{j_1},\hat{x}_{j_2})$ and $(x_{j_1},x_{j_2})$.
Since, according to Lemma \ref{lem:straight}, the solutions of the $(j_1,j_2)$-system appear as straight half-lines on the
$(x_{j_1},x_{j_2})$-plane the endpoints of which are $(x_{j_1},x_{j_2})=(\hat x_{j_1},\hat x_{j_2})$,
this implies that the sign of the area does not change along the trajectory. This implies
the original claim of the Lemma.

\end{proof}

\section{Governing differential equations for the linear and the nonlinear models}\label{sec:return}

In this section we return to the question of modelling the actual crack evolution phenomena described in Section \ref{sec:intro}. We will show that the general framework introduced in section~\ref{sec:math} provides several possibilities for this modelling task. One of these options, the linear model of subsection~\ref{ss:linear}, can be regarded as the natural continuous time analogue of the discrete time evolution process studied in \cite{disc}. However, we will see that the setup of section~\ref{sec:math} provides also further, more realistic possibilities, specifically the nonlinear model of subsection~\ref{ss:nonlinear}.

\subsection{The linear model}\label{ss:linear}
In \cite{disc} a specific geological model for crack networks is analyzed with the physical meaning of the $J+1=3$ variables given in (\ref{discretesystem}), in particular, we have
\begin{equation}\label{3variables}
N^{\star}\equiv X_0 , \qquad V \equiv X_1, \qquad F \equiv X_2.
\end{equation}
Here we implement this model into the framework of the ODE (\ref{ode1}). The model contains $(I+1)=2$ types of micro-events illustrated in Figure \ref{fig:2}, and we also set $$x \equiv x_1 \equiv \frac{V}{N^{\star}}; \qquad y\equiv x_2 \equiv \frac{F}{N^{\star}}.$$ From Assumption \ref{ass:2} it follows that the number of clocks does not depend on the type of the event but only the frequency does. The fundamental table of this model can be determined based on equation (\ref{linear_increments}). It is given in Table \ref{tab:1:linear} below:
\begin{table}[h!]
    \centering
    \begin{tabular}{| c||c||c|c|c||c||c|c|c|}
         \hline
         $i$ &  Name of step $R_i$  & $C_{i,0}$ & $C_{i,1}$ & $C_{i,2}$  & $ \lambda_i$ & $\Delta N(i)$ & $\Delta V(i)$ & $\Delta F(i)$ \\
         \hline
         \hline
         0 &   Secondary crack & 1 & 0 & 0 & $1-q$ &  4 & 2 & 1 \\
         \hline
         1  & Crack healing &  1 & 0 & 0 & $q$  &  1 & 0 & 0 \\
         \hline
         \hline
    \end{tabular}
    \caption{Fundamental table of the model described in \cite{disc}.}
    \label{tab:1:linear}
\end{table}

First we  compute the clock densities as
\begin{equation}\label{clockdensity1}
c_0=c_1=\frac{N^{\star}}{N^{\star}}=1
\end{equation}
Next we compute the individual clock functions based on (\ref{clockfunction0})
\begin{equation}
    \gamma_0=c_0\lambda_0=1-q; \quad \gamma_1=q
\end{equation}
and, based on Definition \ref{clockfunction}, the cumulative clock function:
\begin{equation}
    \gamma(x,y)=\gamma_0+\gamma_1=1.
\end{equation}
The probabilities can be obtained using (\ref{probabilities}):
\begin{equation}
    p_0=1-q; \quad p_1=q
\end{equation}
and for the expected values $E(\Delta F), E(\Delta V), E(\Delta N^{\star})$ we get, via (\ref{expt}),
the values given in (\ref{linear_increments}).
\begin{equation}
\begin{array}{rccclcl}
    E(\Delta F) & = & p_0\Delta F(0)+p_1\Delta F(1) & = & p_0 &  = & 1-q \\
    E(\Delta V) &= & p_0\Delta V(0)+p_1\Delta V(1) & = & 2p_0 & =& 2-2q \\
    E(\Delta N^{\star}) & = & p_0\Delta N^{\star}(0)+p_1\Delta N^{\star}(1) & = & 3p_0+1 & =& 4-3q.
    \end{array}
    \end{equation}
Next we compute $\hat x,\hat y$ as
\begin{equation}\label{xhatyhat1_linear}
    \hat x=\frac{E(\Delta V)}{E(\Delta N^{\star})}=\frac{2-2q}{4-3q}; \quad \hat y=\frac{E(\Delta F)}{E(\Delta N^{\star})}=\frac{1-q}{4-3q}.
\end{equation}
Substituting (\ref{xhatyhat1_linear}) into (\ref{ode1}) yields the system
\begin{equation}\label{lin1}
 \frac{dx}{dt} = (4-3q)(\hat x -x),
\end{equation}
\begin{equation}\label{lin2}
 \frac{dy}{dt} = (4-3q)(\hat y -y).
\end{equation}
as the governing ODE for the crack evolution model based on the one presented in \cite{disc}. As noted in Lemma \ref{lem:straight}, this system has solution curves appearing as straight lines
in the inverse symbolic plane and, as noted in Lemma \ref{lem:linfixedpoints}, a global attractor arises given by
\begin{equation}\label{eq:linfixedpoints}
    x_c=\hat x \quad y_c=\hat y.
\end{equation}

As noted earlier, for some values of $p$, the fixed point $(x_c(p),y_c(p))$ lies outside $Q$, corresponding to the emergence of nonphysical solutions. See Figure~\ref{fig:4}, Right panel.

\begin{figure}[h!]
\begin{center}
\includegraphics[width=\textwidth]{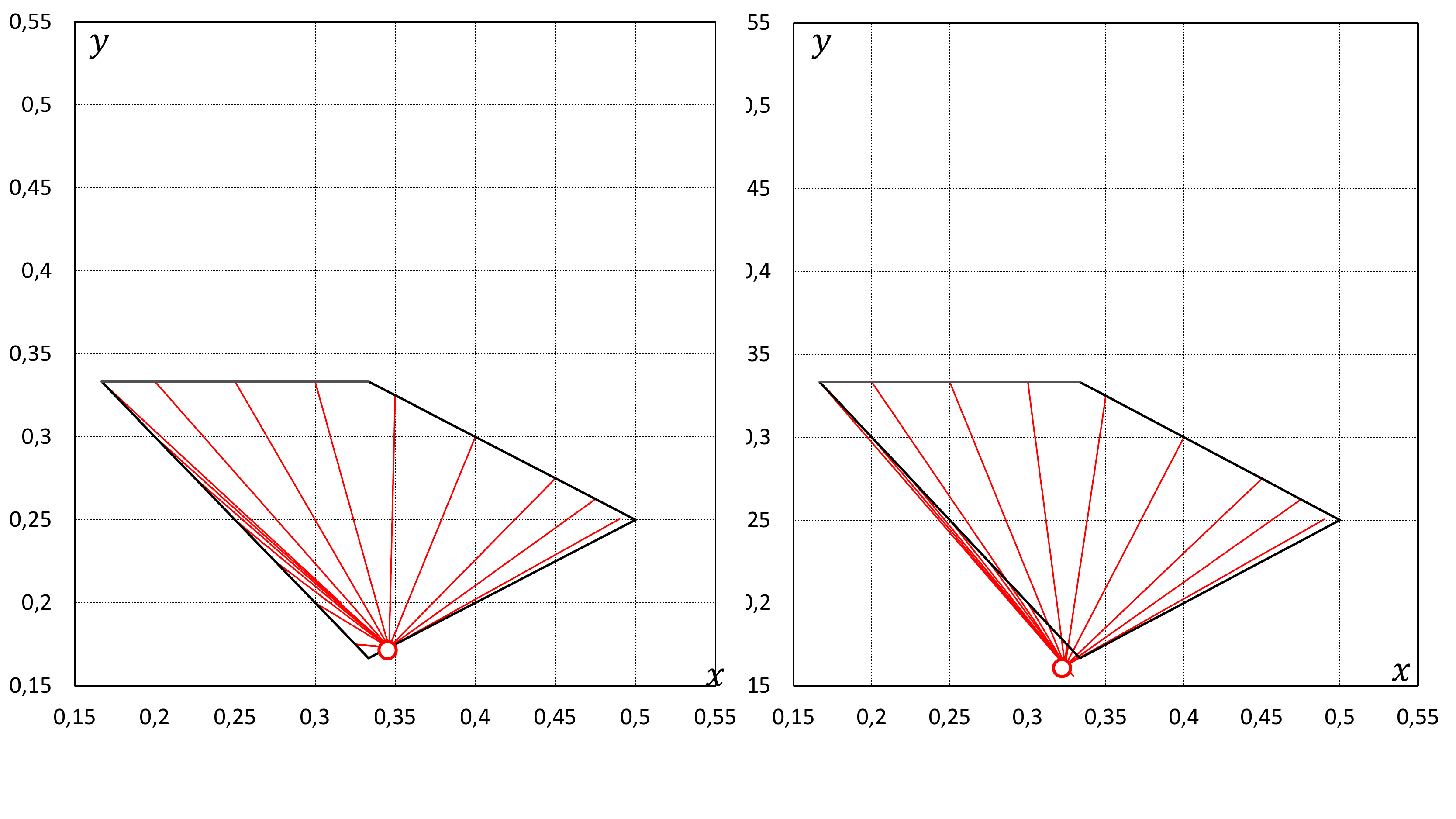}
\caption{Trajectories of the linear model (\ref{lin1})-(\ref{lin2}) on the inverse symbolic plane.
Left panel: $p=0.64$,  fixed point at
$(x_c,y_c)=(0.346, 0.173)$.  Right panel:  $p=0.685$, (nonphysical) fixed point at $(x_c,y_c)=(0.324,0.162)$.}\label{fig:4}
\end{center}
\end{figure}

\subsection{The nonlinear model}\label{ss:nonlinear}
In subsection~\ref{ss:linear} we considered a linear model which is the continuous time analogue of the discrete model of \cite{disc}.
Nonetheless, if instead of Assumption \ref{ass:2} we adopt Assumption \ref{ass:3}, we see that both the number of clocks as well as their frequencies will depend on the type of the event. In the present subsection we build a model for crack evolution that incorporates these possibilities. To get the relevant fundamental quantities,
first we express
the ratio $V_I/V$ of irregular nodes versus all nodes as a function of $x,y$. To this end, we recognize that if we sum up all angles for nodes and we sum up all (internal) angles for cells we must get the same result. We also note that the total angle at a regular node is $2\pi$ while at an irregular node this is $\pi$. So we write
\begin{equation}
V_I\pi +(V-V_I)2\pi=\left(\frac{N^{\star}}{F}-2\right)F\pi,
\end{equation}
from which we get
\begin{equation}\label{vi}
    V_I=2V+2F-N^{\star},
\end{equation}
yielding
\begin{equation}
    \frac{V_I}{V}=2+2\frac{y}{x}-\frac{1}{x}=\frac{2x+2y-1}{x}.
\end{equation}
Using (\ref{vi}), we arrive at the fundamental table given in Table \ref{tab:1}.

\begin{table}[h!]
    \centering
    \begin{tabular}{|c||c||c|c|c||c||c|c|c|}
         \hline
        $i$ & Name of step $R_i$ & $C_{i,0}$ & $C_{i,1}$ & $C_{i,2}$  & $ \lambda_i$ & $\Delta N^{\star}(i)$ & $\Delta V(i)$ & $\Delta F(i)$ \\
         \hline
         \hline
         0 & Secondary crack & 0 & 0 & 1 & $\lambda_0$ &  4 & 2 & 1 \\
         \hline
         1 & Crack healing &  -1 & 2 & 2 & $\lambda_0/\mu$  &  1 & 0 & 0 \\
         \hline
         \hline
    \end{tabular}
    \caption{Fundamental table of the nonlinear model}
    \label{tab:1}
\end{table}
This table has two positive parameters, $\lambda_0$ and $\mu$. $\lambda_0$ is just a generic parameter that scales the speed of the process. $\mu$ is, however, the ratio of the frequencies of the two types of micro-events, and it plays an important role. Specifically, below we investigate how the asymptotic properties of the generated system of ODE depend on the parameter $\mu$.
Now we proceed to compute the clock densities using (\ref{clockdensity})  as
\begin{equation}\label{clockdensity2}
c_0=\frac{F}{N^{\star}}=y; \quad c_1=\frac{V_I}{N^{\star}}=\frac{V_I}{V}\frac{V}{N^{\star}}=2x+2y-1.
\end{equation}
Next we compute the individual clock functions based on (\ref{clockfunction0}) as
\begin{equation}
    \gamma_0=\lambda_0 y; \quad \gamma_1=\lambda _1(2x+2y-1)
\end{equation}
and, based on Definition \ref{clockfunction}, the cumulative clock function:
\begin{equation}\label{gamma}
    \gamma(x,y)=\lambda_1(2x+2y+\mu y-1).
\end{equation}
The probabilities can be obtained using (\ref{probabilities}):
\begin{equation}
    p_0=\frac{\mu y}{2x+2y+\mu y-1}; \quad p_1=\frac{2x+2y-1}{2x+2y+ \mu y-1}.
\end{equation}
and for the expected values $E(\Delta F), E(\Delta V), E(\Delta N^{\star})$ we get, via (\ref{expt}):
\begin{equation}\label{exptvalues}
\begin{array}{rccccclcl}
  \nu _0 & = & E(\Delta N^{\star}) & = & p_0\Delta N^{\star}(0)+p_1\Delta N^{\star}(1) & = & 3p_0+1 & =& \frac{4\mu y+ 2x+2y-1}{2x+2y+\mu y-1} \\
    \nu _1  & = & E(\Delta V) &= & p_0\Delta V(0)+p_1\Delta V(1) & = & 2p_0 & =& \frac{2\mu y}{2x+2y+ \mu y-1} \\
    \nu _2 & = & E(\Delta F) & = & p_0\Delta F(0)+p_1\Delta F(1) & = & p_0 &  = &\frac{\mu y}{2x+2y+\mu y-1}.
    \end{array}
    \end{equation}
Next we compute $\hat x,\hat y$ as
\begin{equation}\label{xhatyhat1}
    \hat x=\frac{\nu _1}{\nu _0}==\frac{2p_0}{3p_0+1}=\frac{2\mu y}{4\mu y+2x+2y-1}; \quad \hat y=\frac{\nu _2}{\nu _0}=\frac{p_0}{3p_0+1}=\frac{\mu y}{4\mu y+2x+2y-1}
\end{equation}
Substituting (\ref{gamma}), (\ref{exptvalues}) and (\ref{xhatyhat1}) into (\ref{ode1}) yields the governing ODE system for the nonlinear crack evolution model:

\begin{equation}\label{nonlin1a}
    \frac{dx}{dt}=\lambda _1 \left(2x+2y+ \mu y-1\right)\frac{4\mu y+ 2x+2y-1}{2x+2y+\mu y-1}(\hat x-x)
\end{equation}
\begin{equation}\label{nonlin2a}
    \frac{dy}{dt}=\lambda_1 \left(2x+2y+ \mu y-1\right)\frac{4\mu y+ 2x+2y-1}{2x+2y+\mu y-1}(\hat y-y).
\end{equation}
Using Remark \ref{convdomain} we find that
\begin{equation}
2x+2y+ \mu y-1=0
\end{equation}
has solutions only outside the domain of convex mosaics, so we arrive at
\begin{equation}\label{nonlin1}
    \frac{dx}{dt}=\lambda _1 (2\mu y-4\mu xy- 2x^2-2xy+x)=f(x,y);
\end{equation}
\begin{equation}\label{nonlin2}
    \frac{dy}{dt}=\lambda_1 (\mu y-4\mu y^2- 2xy-2y^2+y)=g(x,y).
\end{equation}

The main difference between the two models is that Assumption \ref{ass:3} is much closer to the actual geological process and as a consequence, it does not admit nonphysical solutions
exiting the domain of convex mosaics. 

In the subsequent sections~\ref{sss:fixed},~\ref{sss:linstab} and~\ref{sss:globstab} we study the asymptotic properties of the system of ODE \eqref{nonlin1}--\eqref{nonlin2} in the domain of convex mosaics $Q$. In particular, 
we prove the following Proposition.
\begin{prop} \label{p:asymp_nonlin}
For each value of $\mu$, the system of ODE \eqref{nonlin1}--\eqref{nonlin2} has a unique, globally attracting 
fixed point $\ux_c(\mu)=(x_c(\mu),y_c(\mu))$ in $Q$.\\
$\ux_c(\mu)$ is located on the boundary, in particular on the edge $\partial Q_1=Q\cap L_1$ of $Q$ 
(see Remark~\ref{convdomain} for notations). Moreover, the map $\mu\to \ux_c(\mu)$ is one-to-one and onto the interior of the edge $\partial Q_1$; as $\mu$ grows from $0$ to $\infty$, $\ux_c(\mu)=(x_c(\mu),y_c(\mu))$ changes monotonically between the two endpoints of the edge; ie.~between $\left(\frac13,\frac16\right)$ and $\left(\frac12,\frac14\right)$.
\end{prop}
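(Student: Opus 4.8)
The plan is to locate the fixed point explicitly, establish forward invariance of $Q$, and then run a two-stage convergence argument that exploits the fact that $L_1$ is an invariant line of \eqref{nonlin1}--\eqref{nonlin2}. Throughout I take $\mu\in(0,\infty)$ (the positive parameter of Table~\ref{tab:1}); the endpoints $(1/3,1/6),(1/2,1/4)$ of $\partial Q_1$ will appear only as the limits $\mu\to 0^+$ and $\mu\to\infty$. \emph{Locating the fixed point.} Write $\Gamma(x,y)=4\mu y+2x+2y-1$, so that by construction $f=\lambda_1\Gamma\,(\hat x-x)$ and $g=\lambda_1\Gamma\,(\hat y-y)$, with the identity $\hat x\equiv 2\hat y$ read off from \eqref{xhatyhat1}. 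On $Q$ one has $2x+2y-1\ge 0$ (vanishing exactly on $\partial Q_2$) and, combining the defining inequalities $2y\ge x$ and $2x+2y\ge 1$, also $y\ge 1/6$; hence $\Gamma\ge 2\mu/3>0$ everywhere on $Q$. Therefore a fixed point in $Q$ must satisfy $x=\hat x$ and $y=\hat y$, and since $\hat x=2\hat y$ this forces $x=2y$, i.e.\ the fixed point lies on $L_1$. Substituting $x=2y$ into $g=0$ leaves, besides the root $y=0\notin Q$, the single equilibrium
\[
y_c(\mu)=\frac{\mu+1}{4\mu+6},\qquad x_c(\mu)=2y_c(\mu)=\frac{\mu+1}{2\mu+3},
\]
and $1/6<y_c(\mu)<1/4$ for every $\mu\in(0,\infty)$, so $\ux_c(\mu)$ lies in the relative interior of $\partial Q_1$.

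\textbf{The bijection, and invariance of $Q$.} Since $y_c'(\mu)=2/(4\mu+6)^2>0$, both $y_c$ and $x_c=2y_c$ are strictly increasing, with limits $(1/3,1/6)$ as $\mu\to 0^+$ and $(1/2,1/4)$ as $\mu\to\infty$; this is exactly the claimed monotone one-to-one correspondence between $\mu\in(0,\infty)$ and the interior of $\partial Q_1$. Next I would show $Q$ is forward invariant by inspecting the vector field on its four edges. On $\partial Q_1\subset\{x=2y\}$ it is tangent (see the next paragraph). For the other three edges it is convenient to use that $\hat y=\mu y/\Gamma\le 1/4$ on $Q$, with equality exactly on $\partial Q_2$ (this is again just $2x+2y-1\ge 0$): then $\tfrac{d}{dt}(2x+2y)=2\lambda_1\Gamma\big(3\hat y-(x+y)\big)=\tfrac12\lambda_1\Gamma>0$ on $\partial Q_2$ (where $x+y=1/2,\ \hat y=1/4$); $\dot y=\lambda_1\Gamma(\hat y-\tfrac13)<0$ on $\partial Q_3$; and $\tfrac{d}{dt}(x+2y)=\lambda_1\Gamma(4\hat y-1)<0$ on $\partial Q_4$ (where $x+2y=1$ and, since $\partial Q_4\cap\partial Q_2=\varnothing$ in $Q$, $\hat y<1/4$ strictly). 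A routine check at the four corners (the vector field lies in the tangent cone of $Q$ there; none is an equilibrium for $\mu>0$) then gives forward invariance of $Q$.

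\textbf{Global attraction.} The structural observation that drives everything is that, using $\hat x\equiv 2\hat y$,
\[
\frac{d}{dt}(x-2y)=f-2g=\lambda_1\Gamma(x,y)\big[(\hat x-x)-2(\hat y-y)\big]=-\lambda_1\Gamma(x,y)\,(x-2y).
\]
Writing $w=x-2y$ (so $w\le 0$ on $Q$), and using $\Gamma\ge 2\mu/3>0$ on the compact invariant set $Q$, a Gronwall estimate gives $|w(t)|\le|w(0)|e^{-(2\lambda_1\mu/3)t}\to 0$: every trajectory in $Q$ approaches the segment $\partial Q_1$. On $\partial Q_1$ the induced one-dimensional flow is $\dot y=\lambda_1 y\big((\mu+1)-(4\mu+6)y\big)$, a logistic equation whose only equilibrium in $(1/6,1/4)$ is $y_c(\mu)$ and which flows toward it from both sides (the endpoints are not equilibria). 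Hence the $\omega$-limit set of any trajectory in $Q$ is a nonempty, compact, connected invariant subset of $\partial Q_1$, which forces it to equal $\{\ux_c(\mu)\}$; thus $\ux_c(\mu)$ is globally attracting in $Q$. (As an independent confirmation of local stability: in the coordinates $(w,y)=(x-2y,\,y)$ the Jacobian at $\ux_c(\mu)$ is lower triangular with diagonal entries $-\lambda_1\mu$ and $-\lambda_1(\mu+1)$, the second corresponding to the direction along $L_1$; both are negative, so $\ux_c(\mu)$ is an asymptotically stable node.)

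\textbf{Where the work lies.} The genuinely delicate point is that $\ux_c(\mu)$ sits on $\partial Q$, so ``globally attracting in $Q$'' is a statement about boundary convergence; the decoupled equation $\dot w=-\lambda_1\Gamma\,w$ above is precisely what makes it tractable, but it forces one to keep $\Gamma$ bounded away from $0$ on $Q$ — which is true exactly because $\mu>0$ (at $\mu=0$ the edge $\partial Q_2$ collapses into a whole line of equilibria, which is why the two endpoints of $\partial Q_1$ show up only in the limit) — and to verify that the one-dimensional flow on $\partial Q_1$ admits no invariant set larger than $\{\ux_c(\mu)\}$. The other calculation-heavy piece is the edge-by-edge and corner-by-corner invariance check, although each case collapses to an elementary inequality once the bound $\hat y\le 1/4$ on $Q$ is in hand. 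An alternative to the $\omega$-limit argument, if one prefers an explicit estimate, is a Lyapunov/Gronwall bound on $(y-y_c)^2$ along trajectories, using $w(t)\to 0$ and $y\ge 1/6$ on $Q$.
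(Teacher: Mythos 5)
Your proof is correct and follows essentially the same route as the paper's: you locate the fixed point by noting $4\mu y+2x+2y-1>0$ on $Q$ and solving $x=\hat x$, $y=\hat y$; you establish forward invariance of $Q$ by checking the inward component of the vector field edge by edge; and your decoupled equation $\frac{d}{dt}(x-2y)=-\lambda_1\Gamma\,(x-2y)$ is exactly the paper's Lyapunov function $V=2y-x$ with $\dot V=-\lambda_1\Gamma V$, followed by the same one-dimensional analysis on $\partial Q_1$. Your extra touches (the explicit bound $\Gamma\ge 2\mu/3$ giving a Gronwall rate, the logistic form of the flow on $L_1$, and the $\omega$-limit-set invariance argument) slightly tighten the final convergence step but do not change the method.
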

Below the proof of Proposition~\ref{p:asymp_nonlin} is broken down as follows: in Lemma~\ref{lem:nonlinfixedpoints} we find the fixed point $\ux_c(\mu)$, in section~\ref{sss:linstab} a linear stability analysis is performed, and finally in Lemma~\ref{lem:global} we show that for each value of $\mu$, the fixed point $\ux_c(\mu)$ is globally attracting in $Q$.

\subsubsection{Fixed points \label{sss:fixed}}

\begin{lem}\label{lem:nonlinfixedpoints}
The fixed points of (\ref{nonlin1})-(\ref{nonlin2}) in the domain of convex mosaics can be written as
\begin{equation}
    x_c=\frac{2\mu +2}{4\mu +6}=\frac{\mu +1}{2\mu +3}; \quad y_c=\frac{\mu +1}{4\mu +6}.
\end{equation}
\end{lem}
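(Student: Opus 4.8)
The plan is to solve the system $f(x,y)=0$, $g(x,y)=0$ directly, using the factored forms \eqref{nonlin1}--\eqref{nonlin2} and the crucial simplification already recorded just above them: on the domain $Q$ of convex mosaics the prefactor $\lambda_1(2x+2y+\mu y-1)\cdot\frac{4\mu y+2x+2y-1}{2x+2y+\mu y-1}$ is nonzero, and moreover $2x+2y+\mu y-1\neq 0$ and $4\mu y+2x+2y-1\neq 0$ throughout $Q$. Consequently, referring back to the unsimplified forms \eqref{nonlin1a}--\eqref{nonlin2a}, a point of $Q$ is a fixed point if and only if it satisfies simultaneously $\hat x(x,y)-x=0$ and $\hat y(x,y)-y=0$, with $\hat x,\hat y$ as in \eqref{xhatyhat1}. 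So the whole computation reduces to intersecting the two curves $x=\hat x(x,y)$ and $y=\hat y(x,y)$.

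First I would exploit the structural fact, visible in \eqref{xhatyhat1}, that $\hat x = 2\hat y$; hence at any fixed point $x = 2y$. (Equivalently, dividing the two equations one sees the fixed point must lie on $L_1: y=x/2$, which is exactly the edge $\partial Q_1$ anticipated in Proposition~\ref{p:asymp_nonlin}.) Substituting $x=2y$ into the single remaining equation $y=\hat y(x,y)=\dfrac{\mu y}{4\mu y+2x+2y-1}$ and clearing the denominator gives $y\,(4\mu y+6y-1)=\mu y$. One solution is the spurious $y=0$ (outside $Q$), and otherwise we get the linear equation $4\mu y+6y-1=\mu$, i.e.\ $y_c=\dfrac{\mu+1}{4\mu+6}$, and then $x_c=2y_c=\dfrac{\mu+1}{2\mu+3}=\dfrac{2\mu+2}{4\mu+6}$, which is exactly the claimed formula. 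It remains only to check that this point genuinely lies in $Q$ and that $y=0$ (and any root of the discarded factors) does not, so that the fixed point set inside $Q$ is precisely this single point; this is a routine verification against the four lines \eqref{eq:Q} — in fact $y_c=x_c/2$ places it on $L_1$, and $0\le\mu<\infty$ keeps $x_c\in(1/3,1/2)$, so it lies in the relative interior of the edge $\partial Q_1$.

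The only mild subtlety — and the thing worth stating carefully rather than the algebra — is the justification that passing from \eqref{nonlin1}--\eqref{nonlin2} to the conditions $\hat x=x$, $\hat y=y$ loses no solutions and introduces none: this rests entirely on the non-vanishing, on $Q$, of the two factors $2x+2y+\mu y-1$ and $4\mu y+2x+2y-1$, which one confirms by the same comparison with the bounding lines $L_1,\dots,L_4$ used in the remark preceding \eqref{nonlin1}. Given that, the lemma follows immediately from the elementary substitution above; no further obstacle arises.
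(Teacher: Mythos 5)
Your proposal is correct and follows essentially the same route as the paper: dispose of the case where the factor $4\mu y+2x+2y-1$ vanishes by showing it cannot happen in $Q$, reduce to $x=\hat x$, $y=\hat y$, use $\hat x=2\hat y$ to get $x=2y$, and substitute to obtain $y_c=\frac{\mu+1}{4\mu+6}$. Your explicit discarding of the spurious root $y=0$ and the check that $(x_c,y_c)$ lies on the edge $\partial Q_1$ are minor additions in the same spirit as the paper's argument.
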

\begin{proof}

To find the fixed points $(x_c,y_c)$ of  \eqref{nonlin1}--\eqref{nonlin2}, we have to solve for $f(x,y)=g(x,y)=0$.
This happens if either
\begin{equation}\label{fixed1}
4\mu y+ 2x+2y-1=0,
\end{equation}
or 
\begin{equation}\label{fixed2}
    \begin{array}{rcl}
    x & = & \hat  x \text{ and}\\
    y & = & \hat y.
    \end{array}
\end{equation}
In the first case, that is, if equation (\ref{fixed1}) holds, we have
\begin{equation}\label{fixed_whydisregard}
y=\frac{0.5-x}{1+2\mu}.
\end{equation}
For $\mu >0$ the line \eqref{fixed_whydisregard} is outside the domain of convex mosaics $Q$ (cf.~Figure~\ref{fig:0}(b)
and Remark \ref{convdomain}). Hence fixed points in $Q$ have to satisfy
(\ref{fixed2}) which
yields, based on (\ref{xhatyhat1}):
\begin{equation}\label{p1}
x= \frac{2p_0}{3p_0+1}; \qquad y=\frac{p_0}{3p_0+1}
\end{equation}
from where we can express $y$ as
\begin{equation}\label{p2}
    y= \frac{\mu y}{4\mu y + 2x + 2y -1}
\end{equation}
from which we get
\begin{equation}\label{p3}
    y= \frac{\mu+1-2x}{4\mu +2}.
\end{equation}
Based on (\ref{p1}) we have $x=2y$ and if we substitute this into (\ref{p3}), we get the expression in the Lemma.
\end{proof}

\begin{rem}
The fixed points of the linear system are given in (\ref{eq:linfixedpoints}) and we can see from (\ref{xhatyhat1_linear}) that for  all fixed points we have $y=x/2$.
Lemma \ref{lem:nonlinfixedpoints} shows that the fixed points of the nonlinear system are located on the same line, however, unlike in the linear system, all fixed points appear in the intervals
$x \in [1/3, 1/2]$ $(y \in [ 1/6, 1/4])$ implying that all trajectories remain inside the domain of convex mosaics.
The trajectories of the linear model are depicted on Figure \ref{fig:4} for the parameters $p=0.64$ and $p=0.68$ (the latter having a nonphysical fixed point). The trajectories of the nonlinear model are depicted on Figure \ref{fig:4a} for the parameters $\mu = 0.125$ and $\mu=1$ .
\end{rem}

\begin{rem}
We note that the edge $\partial Q_1$ of $Q$ -- on which the fixed point $\ux_c(\mu)$ is located for every value of $\mu$ --   corresponds to patterns such that the combinatorial degree of each node is equal to $3$. This is in accordance with our expectations, as the secondary crack steps create nodes with combinatorial degree $3$, while the crack healing steps do not change this property of the nodes involved. Actually, it can be expected that the fixed point $\ux_c(\mu)=(x_c(\mu),y_c(\mu))$ corresponds to a stationary state of the random crack pattern evolution process. In this stationary state nodes have combinatorial degree $3$ with probability $1$, yet, their corner degree is truly random and may take values either $3$ (corresponding to regular, ``Y''-shaped nodes of combinatorial degree $3$) or $2$ (corresponding to irregular, ``T''-shaped nodes of combinatorial degree $3$).
As in course of the process ``T''-shaped nodes are created with rate $\lambda_0$ (secondary cracks) while ``T''-shaped nodes turn into ``Y''-shaped nodes with rate $\lambda_0/\mu$, it is expected that in the stationary state the proportion of ``T''-shaped and ``Y''-shaped nodes are $\frac{\mu}{\mu+1}$ and $\frac{1}{\mu+1}$, respectively. Accordingly we expect that the average corner degree of nodes in the stationary state is
\[
2\cdot \frac{\mu}{\mu+1} + 3\cdot \frac{1}{\mu+1}=\frac{2\mu+3}{\mu+1}
\]
which is exactly $(x_c(\mu))^{-1}$. Hence this prediction is in accordance with the asymptotic properties of the ODE 
\eqref{nonlin1}--\eqref{nonlin2}, which provides further evidence for the relevance of our model.
\end{rem}

\subsubsection{Linear stability analyis \label{sss:linstab}}
Next we perform linear stability analysis of the fixed points. The Jacobian of (\ref{nonlin1})-(\ref{nonlin2})
at the fixed point given in Lemma (\ref{lem:nonlinfixedpoints}) can be computed as:
\begin{equation}
J=\lambda _1
  \begin{bmatrix}
  \frac{-4\mu^2 -10\mu -4}{4\mu +6} & \frac{-4}{4\mu +6} \\
  \frac{-2\mu -2}{4\mu +6} & \frac{-4\mu^2 -6\mu -2}{4\mu +6}
 \end{bmatrix}
\end{equation}
For the eigenvalues and eigenvectors we get
\begin{equation}
\alpha _1= \lambda _1 (-\mu)= -\lambda _0,   \hspace{1cm}
\alpha _2= \lambda _1 (-\mu -1)=-\lambda _0 \frac{\mu +1}{\mu}
\end{equation}
and
\begin{equation}\label{eq:eigen}
v _1=\frac{-u}{\mu +1}
\begin{bmatrix}
1 \\
\mu + 1
\end{bmatrix}   \hspace{1cm}
v _2=u
\begin{bmatrix}
2 \\
1
\end{bmatrix}.
\end{equation}
We first observe that both eigenvalues are negative, so the fixed point locally attracting.
We also observe that the eigenvector $v_2$ is collinear with the $y=x/2$ line at the border of the domain
of convex mosaics. The eigenvector $v_1$ is collinear with the $y=1/2-x$ line if $\mu =0$ and it approaches the vertical direction (parallel to the $y$ axis) as $\mu \to \infty$.
Let us also note that $|\alpha_2|>|\alpha_1|$ for any value of $\mu$, so the attraction in the direction of $v_2$ is always stronger than in the direction of $v_1$.

\begin{figure}[h!]
\begin{center}
\includegraphics[width=\textwidth]{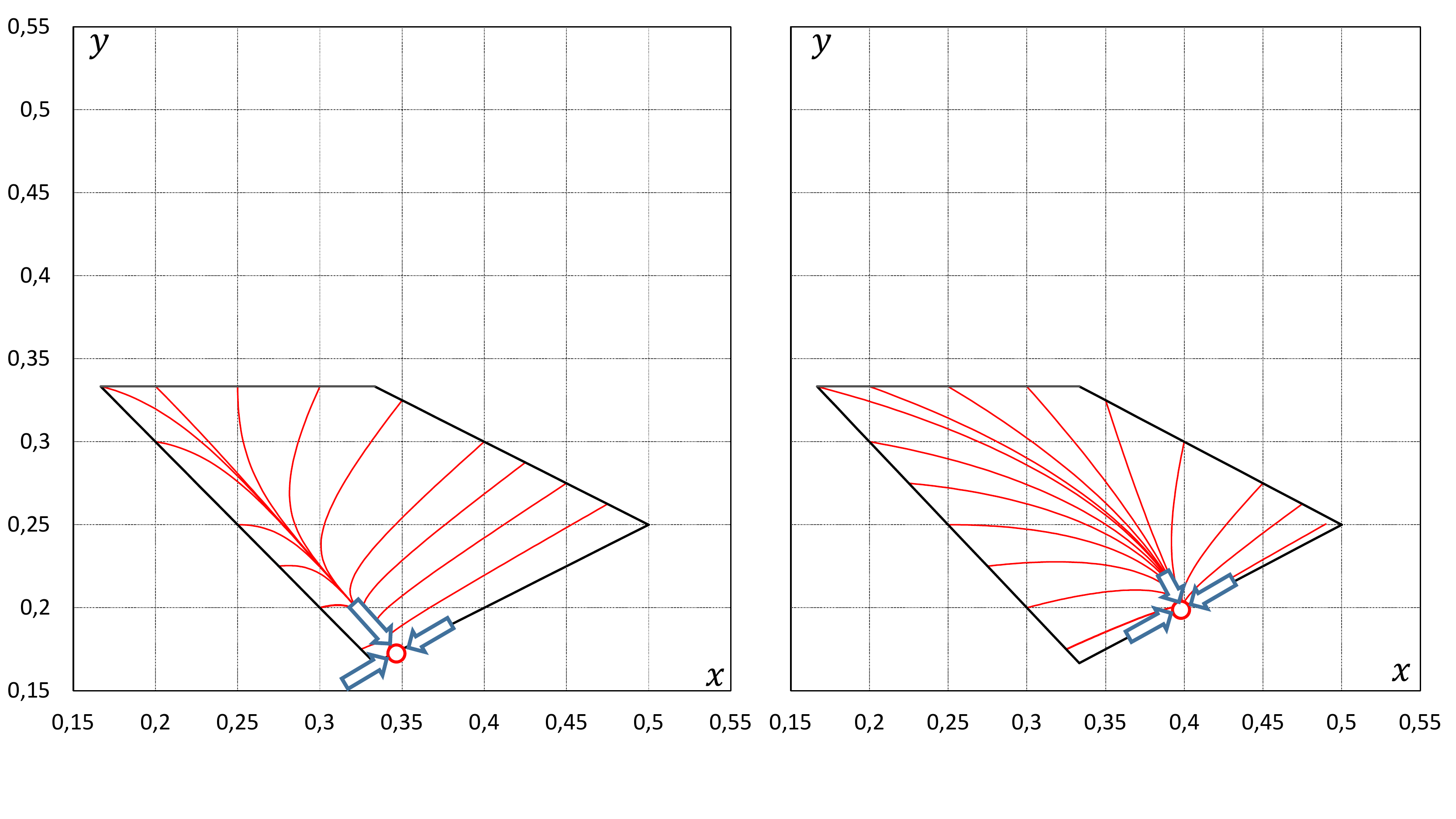}
\caption{Trajectories of the nonlinear model (\ref{nonlin1})-(\ref{nonlin2}) on the inverse symbolic plane. Left panel: $\mu=0.125$, fixed point at
$(x_c,y_c)=(0.346, 0.173)$. Right panel: $\mu=1$, fixed point at
$(x_c,y_c)=(0.4,0.2)$. Eigenvectors (\ref{eq:eigen}) are indicated by blue arrows.}\label{fig:4a}
\end{center}
\end{figure}

\begin{figure}[h!]
\begin{center}
\includegraphics[width=1.5\textwidth]{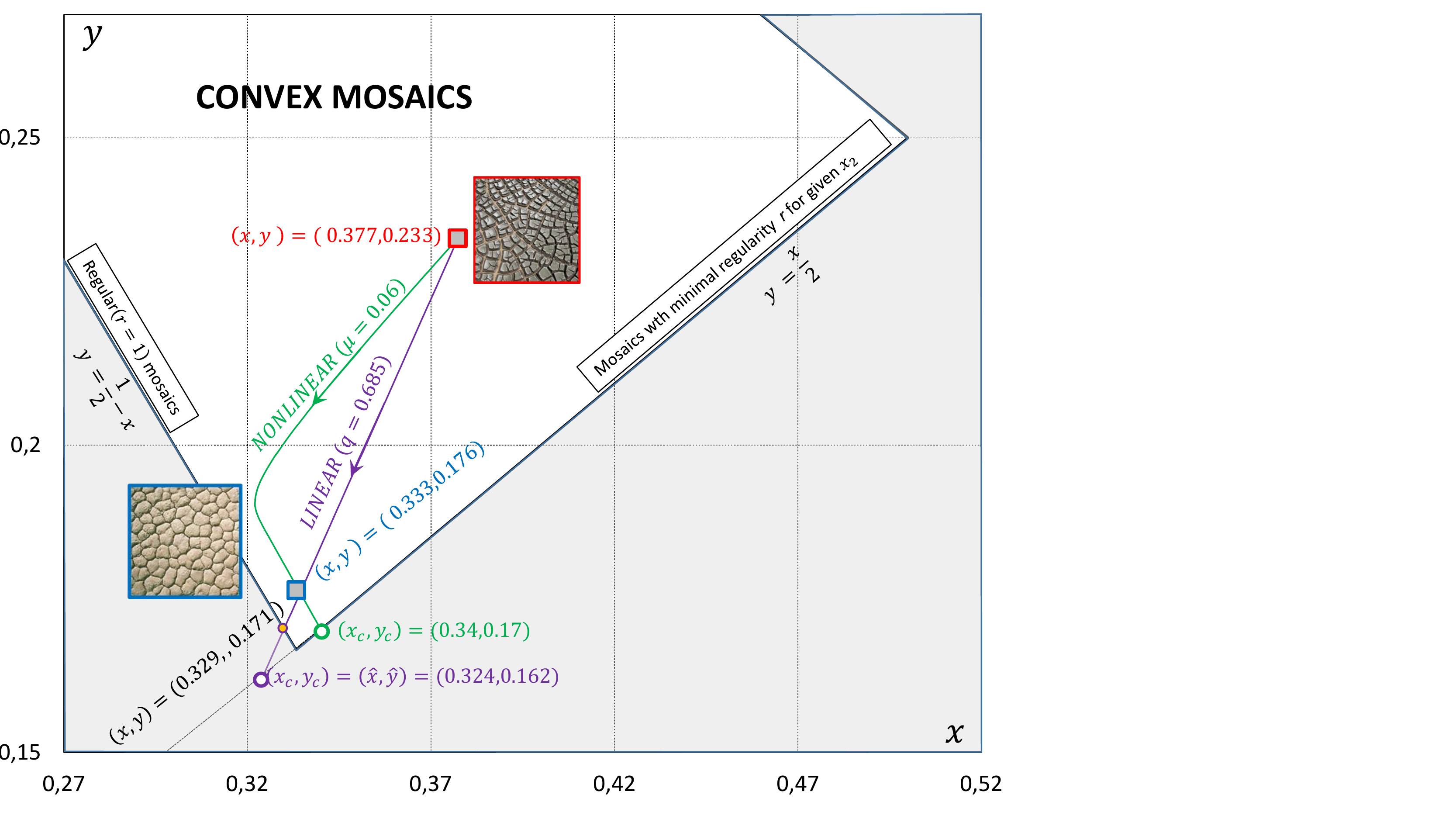}
\caption{Trajectories connecting two geological mudcrack patterns (presented in \cite{Plato}), located in the $[x,y]$ plane at
$(x,y)=0.377,0.233)$ (red frame) and $(x,y)=(0.333,0.176)$ (blue frame). Trajectory for the linear model with parameter $q=0.685$ shown with purple line. Observe that this trajectory transversely crosses the line $y=0.5-x$ marking regular mosaics and also the envelope of convex mosaics at $(x,y)=(0.239,0.171)$ and then continues as a nonphysical trajectory until reaches its limitpoint $ (x_c,y_c)=(0.32,0.16)$ on the $y=x/2$ line. Trajectory from the nonlinear model with parameter $\mu=0.06$ first approaches the $y=0.5-x$ line marking regular mosaics then turns sharply and reaches its limitpoint $(x_c,y_c)=(0.339,0.169)$ on the $y=x/2$ line at the border of the domain
of convex mosaics.}\label{fig:3}
\end{center}
\end{figure}

\subsubsection{Global stability  \label{sss:globstab}}

\begin{lem}\label{lem:global}
 Let $Q$ denote the closed quadrangle of physical configurations described in Remark \ref{convdomain}. Then,
 the fixed point $(x_c,y_c)$ is asymptotically stable in $Q$: for every initial condition
$(x_0,y_0)\in Q$ the system (\ref{nonlin1})-(\ref{nonlin2}) satisfies $\lim_{t\to\infty} (x(t),y(t))=(x_c,y_c)$.
\end{lem}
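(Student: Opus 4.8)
The plan is to exploit the product structure of the right-hand sides of \eqref{nonlin1}--\eqref{nonlin2}: writing $h(x,y):=\gamma(x,y)\nu_0(x,y)=\lambda_1(4\mu y+2x+2y-1)$ for the common prefactor already visible in \eqref{nonlin1a}--\eqref{nonlin2a}, the system takes the form $\dot x=h(x,y)(\hat x-x)$, $\dot y=h(x,y)(\hat y-y)$ with $\hat x,\hat y$ as in \eqref{xhatyhat1}. The key algebraic fact is that $\hat x\equiv 2\hat y$ identically (indeed, from \eqref{xhatyhat1}, $\hat x=2\hat y$), so the auxiliary variable $w:=x-2y$ satisfies the \emph{linear} scalar equation $\dot w=-h(x,y)\,w$. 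On $Q$ one has $x+y\ge\tfrac12$ (the defining inequality for the edge $Q\cap L_2$) and $y\ge\tfrac16$, whence $h(x,y)\ge \tfrac23\lambda_1\mu=\tfrac23\lambda_0>0$ on $Q$; moreover $Q$ lies on the side $\{x\le 2y\}=\{w\le 0\}$ of $L_1$. Thus along any orbit remaining in $Q$ we get $|w(t)|\le |w(0)|e^{-\frac23\lambda_0 t}\to 0$, so orbits are attracted to the edge $\partial Q_1=Q\cap L_1$.

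The first step I would carry out is to verify that $Q$ is forward invariant, i.e.\ that the vector field is subtangential along $\partial Q$. On the edge $\partial Q_1\subset\{w=0\}$ the field is tangent ($\dot w=0$ there, so $L_1$ is an invariant line), and the induced one-dimensional dynamics on $L_1$ (parametrised by $y$, with $x=2y$) is the logistic-type equation $\dot y=\lambda_1 y\bigl(\mu+1-(4\mu+6)y\bigr)$; a sign check at the two corners $y=\tfrac16$ and $y=\tfrac14$ shows the flow points back into $\partial Q_1$ there, so orbits do not leak out through the corners. On the other three edges I would check strict inward pointing by specialising $\hat x,\hat y$ to each line: on $L_2$ (where $2x+2y-1=0$) one gets $\hat x\equiv\tfrac12$, $\hat y\equiv\tfrac14$, so $\dot x+\dot y=\lambda_1\mu y>0$; on $L_3$ ($y=\tfrac13$) one gets $\hat y<\tfrac13$, so $\dot y<0$; on $L_4$ ($x+2y=1$) one gets $4\hat y<1$, so $\dot x+2\dot y=h(4\hat y-1)<0$. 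In each case this is the direction pointing into $Q$, so no orbit crosses $\partial Q$, and the compact set $Q$ is forward invariant.

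The second step is a Lyapunov/LaSalle argument with $V(x,y):=w^2=(x-2y)^2$. On $Q$, $\dot V=2w\dot w=-2h(x,y)V\le 0$, vanishing exactly on $\{w=0\}\cap Q=\partial Q_1$. By LaSalle's invariance principle every orbit starting in $Q$ converges to the largest flow-invariant subset of $\partial Q_1$. On the invariant line $L_1$ the reduced logistic equation above has its only interior equilibrium at $y_c=\tfrac{\mu+1}{4\mu+6}\in(\tfrac16,\tfrac14)$ (matching $x_c=2y_c$ from Lemma~\ref{lem:nonlinfixedpoints}), and it attracts all of $[\tfrac16,\tfrac14]$; since the full orbit through any other point of $\partial Q_1$ leaves $\partial Q_1$ in backward time, that largest invariant subset is the singleton $\{(x_c,y_c)\}$. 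Hence $\lim_{t\to\infty}(x(t),y(t))=(x_c,y_c)$ for every $(x_0,y_0)\in Q$.

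The main obstacle I anticipate is the boundary bookkeeping in the forward-invariance step, and in particular the slightly delicate point that the field is only tangent (not strictly inward) along $L_1$, so one must rule out escape through the corners $(\tfrac13,\tfrac16)$ and $(\tfrac12,\tfrac14)$ by analysing the reduced scalar flow there; the rest is routine once the identity $\hat x\equiv 2\hat y$ (hence $\dot w=-hw$) has been noticed. As an alternative to LaSalle one could argue directly with $\omega$-limit sets: the $\omega$-limit set of an orbit in the compact invariant set $Q$ is nonempty, compact, connected and flow-invariant, and $w\to 0$ forces it into $\partial Q_1$, where the one-dimensional phase portrait leaves only $\{(x_c,y_c)\}$.
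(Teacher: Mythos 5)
Your proof is correct and follows essentially the same route as the paper: forward invariance of $Q$ checked edge by edge, a Lyapunov function vanishing exactly on $L_1$ (the paper uses $V=2y-x$, you use the square of the same quantity) to force the $\omega$-limit set into $\partial Q_1 = Q\cap L_1$, and the one-dimensional dynamics on that edge to single out $(x_c,y_c)$. Your observation that $\hat x\equiv 2\hat y$ turns $w=x-2y$ into a solution of $\dot w=-h\,w$ with $h\ge\tfrac{2}{3}\lambda_0$ on $Q$ is a nice quantitative sharpening (an explicit exponential rate of attraction to the edge), and your explicit reduction to the logistic equation $\dot y=\lambda_1 y\bigl(\mu+1-(4\mu+6)y\bigr)$ on $L_1$, together with the backward-time argument identifying the largest invariant subset of $\partial Q_1$, makes the final step more transparent than the paper's terse treatment.
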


\begin{proof}
\begin{claim}\label{c:boundary}
$Q$ is forward invariant under (\ref{nonlin1})-(\ref{nonlin2}). That is, for any $(x_0,y_0)\in Q$ and $t>0$, we have $(x(t),y(t))\in Q$.
\end{claim}
\begin{proof}
    Let $\partial Q$ denote the boundary of $Q$, where $\partial Q_i =Q\cap L_i$, with
    the lines $L_i$ defined in Remark \ref{convdomain}, equation (\ref{eq:Q}). We introduce the (not necessarily unit) normal vectors $\mathbf{r}_i$, pointing towards the interior of $Q$ as
    $$\mathbf{r}_1=[-1,2]^T; \mathbf{r}_2=[1,1]^T; \mathbf{r}_3=[0,-1]^T; \mathbf{r}_2=[-1,-2]^T $$
    and we also introduce the vectors composed of the derivatives restricted to the lines $L_i$ as
    $$ \mathbf{d}_i=[f|_{L_i},g|_{L_i}]^T.$$
    Now we compute the scalar products $s_i=\mathbf{r}_i\mathbf{d}_i$ and find
    \begin{equation}\label{bdQ}
        \begin{array}{rrrcl}
             (1) & s_1= & \mathbf{r}_1\mathbf{d}_1 & = & 0,  \\
             (2) & s_2= & \mathbf{r}_2\mathbf{d}_2 & = & \mu(0.5-x),  \\
             (3) & s_3= & \mathbf{r}_3\mathbf{d}_3 & = & \mu/9-1/9+2x/3, \\
             (4) & s_4= & \mathbf{r}_4\mathbf{d}_4 & = & x. \\
        \end{array}
    \end{equation}
    From (\ref{bdQ})(1) we see that $L_1$ is itself invariant. Since the scalar products $s_i$, $i=2,3,4$ along all other boundaries are linear functions of $x$, it is sufficient to show that they are positive at the endpoints of relevant intervals. Indeed we find
    $$s_2(1/6),s_2(1/3),s_3(1/6),s_3(1/3),s_4(1/3),s_4(1/2)>0,$$
    thus we completed the proof of Claim \ref{c:boundary}.
\end{proof}

\begin{claim}\label{c:1}
     We claim that $L_1$ is attractive in the following sense:
Using the notations of (\ref{nonlin1})-(\ref{nonlin2}), for every $\mu>0$ there exists a continuously differentiable function $V:Q\to \Re$  such that
 \begin{enumerate}
\item $V(x,y)>0$ for every $(x,y)\in Q \setminus L_1$, while $V(L_1)=0$,
\item $\frac{\partial V}{\partial x}\cdot f(x,y)+\frac{\partial V}{\partial y}\cdot g(x,y)<0$ for $(x,y)\in Q\setminus L_1$.
 \end{enumerate}
 \end{claim}

 \begin{proof}

 Now we set
\begin{equation}
V(x,y)=2y-x.
\end{equation}
Then, condition (1) is trivially satisfied. For condition (2) we write:
\begin{equation}
\frac{\partial V}{\partial x}\cdot f(x,y)+\frac{\partial V}{\partial y}\cdot g(x,y)=2g(x,y)-f(x,y)=
\lambda_1((x-2y)(4\mu y+2x + 2y -1))<0.
\end{equation}
Since we have $\lambda _1>0$ and $x-2y<0$ we only have to show that
$$4\mu y+2x + 2y -1 >0.$$ The latter is also easy to see, since
$4\mu y+2x + 2y -1 > 2x+2y-1$ and for the latter we have
$$2x+2y-1\ge 0, \mbox{ if } (x,y) \in Q,$$
so we completed the proof of Claim \ref{c:1}.
\end{proof}

From Claim \ref{c:boundary} we know that both $Q$ as well as line $L_1$ are invariant under the flow.
Claim~\ref{c:1} then implies that for any initial condition $(x(0),y(0))\in Q$ the $\omega$-limit set is included in $\partial Q_1$. For any $(x(0),y(0))\in \partial Q_1$ we have $\lim_{t\to\infty} (x(t),y(t))=(x_c,y_c)$. This follows as $(x_c,y_c)$ is a unique (and attracting) fixed point in $\partial Q_1$. Thus we have completed the proof of Lemma \ref{lem:global}.
\end{proof}

\begin{rem}\label{rmk:timeinvert}
An important ingredient of the proof of Lemma~\ref{lem:global} is Claim~\ref{c:boundary} about the forward time invariance of $Q$ under the ODE \eqref{nonlin1}--\eqref{nonlin1}, which is itself a remarkable statement. In particular, it shows that no unphysical configurations emerge under this ODE when starting the system from a physical configuration, that is, from $Q$.\\
Let us comment on the fact that invariance holds only in forward time, and no longer for negative values of $t$. That is, trajectories of the ODE starting from outside $Q$ may enter this region. We would like to point out that this does not have any consequences for the original microscopic crack network evolution process of the mosaic. Indeed, while the system of ODE \eqref{nonlin1}--\eqref{nonlin1} is easy to invert in time (by simply reverting the trajectories), it is not obvious how to specify an inverse
of the original crack network evolution. The former is obtained by averaging the later. Even if some ``natural'' inverse of the microscopic process was specified, there is no reason to expect that such an inverted microscopic time evolution would anyhow relate, after averaging, to the reverted trajectories of the ODE \eqref{nonlin1}--\eqref{nonlin1}.
\end{rem}

\section{Concluding remarks}

\subsection{Comparison between the linear and the nonlinear model}\label{ss:lin_nonlin}

Using (\ref{ode1}), in Subsections \ref{ss:linear} and \ref{ss:nonlinear} we derived the governing equations (\ref{lin1})-(\ref{lin2}) for the linear model and (\ref{nonlin1})-(\ref{nonlin2}) for the nonlinear model, respectively. Now we return to the original geological problem, stated in Section \ref{sec:intro} where we set the goal to find possible evolution paths between geological crack patterns.

For better comparison with \cite{disc}, we use the desiccation mud patterns shown in Figure \ref{fig:00} (a) and (b) as examples; the same patterns were analyzed in \cite{Plato} as well. For the locations of these patterns on the symbolic and inverse symbolic planes, see Figure \ref{fig:0}. To illustrate the possible evolution paths, we will only use the inverse symbolic plane $[x,y]$, appearing in Figure \ref{fig:0} (b).

Both the linear and the nonlinear model discussed in Subsections \ref{ss:linear} and \ref{ss:nonlinear}
have a single scalar parameter: $q$ (in the linear model) and $\mu$ (in the nonlinear model). Fitting a trajectory which is initiated at one of the mosaics and hits the other mosaic is a boundary value problem (BVP). We used these
free parameters to solve this problem numerically. We made the following (numerical) conclusions:
\begin{itemize}
    \item[a)] In both models, the BVP could be solved only in \emph{one direction}, i.e. in both cases we could only solve the BVP with the mud crack mosaic in Figure \ref{fig:0}(b), located at $(x,y)=(0.377,0.233)$ as initial point. We did not find a solution to the BVP in the other direction.
    \item[(b)] In the linear model, the solution corresponds to the parameter $q=0.685$ and it has a (nonphysical) limit point at $(x,y)=(0,32,0.16)$.
    \item[(c)] In the nonlinear model, the solution corresponds to the parameter $\mu=0.06$ and it has a (physical) limit point at $(x,y)=(0.339,0.169)$
\end{itemize}
    These findings are summarized in Figure \ref{fig:3}. From the geological perspective, conclusion (a) appears to be relevant as it suggests that our evolution model defines a hierarchy among natural patterns. It is particularly interesting that, although the linear and nonlinear models are fundamentally different, still, they define the same sequence for these two mosaics.

\subsection{Summary and outlook}\label{sec:sum}
In this paper, after revisiting the discrete time model of \cite{disc} in Section~\ref{sec:discmodel}, we introduced an enhanced model for crack network evolution in Section~\ref{sec:math}. We developed upon the previous approach in three manners:
\begin{enumerate}
    \item By generalizing Assumption \ref{ass:2} to Assumption \ref{ass:3} (including exponential clocks), we created a \emph{physically consistent extension} of the  model in \cite{disc}.
    \item We generalized the setting from 3 variables to an \emph{arbitrary number of variables}, admitting the inclusion of models of other physical processes.
    \item Using this basis, we
    \begin{itemize}
        \item derived the rigorous double limit, defining the \emph{general governing ordinary differential equation (\ref{ode1})} and
        \item defined the \emph{fundamental table} of a given physical process.
        \end{itemize}
    Using these two objects, the model-specific ordinary differential equation can be derived for a broad range of processes.
\end{enumerate}

  In Section \ref{sec:return} we derived the model-specific ordinary differential equations for the physical process presented in \cite{disc}: planar crack patterns evolving under secondary cracking of single fragments and crack healing-rearrangement. We developed the governing equations both for the linear version corresponding to the original scenario of \cite{disc}, and the entirely novel nonlinear version. We compared the
  linear and nonlinear models in Subsection \ref{ss:lin_nonlin}.

From the mathematical perspective, there remain interesting open questions around the derivation of the ODE \eqref{ode1} from the stochastic process generated by the individual clocks. As noted in Remark~\ref{r:ode_deriv}, our main Theorem~\ref{lem:1} states only that, for any fixed $t\ge 0$, in the limit of \eqref{lim1}, the random quantities
$\frac{\bx_j(t+\d t) - x_j(t)}{\d t}$ converge in probability to the deterministic value specified by \eqref{ode1}. It is an open problem in what sense the stochastic process $\{\ubx(t), t\ge 0\}$ converges to the deterministic trajectory $\{\ux(t), t\ge 0\}$.

  These models, in particular the nonlinear version, offer an adequate tool for the study of crack network evolution in \emph{constant environments}. For example, laboratory experiments could be targeted with this model.  One remarkable feature which we proved for the nonlinear model (\ref{nonlin1})-(\ref{nonlin2}) is that it has a single global attractor. It appears to be an interesting question which set of
  micro-steps would result in a model with a similar property? Alternatively, one may ask which
  set of micro-steps might produce qualitatively different global dynamics?

\subsection{Possible generalizations}

We discuss two types of generalizations. First we describe the capabilities of the current mathematical model, beyond describing the geophysical process discussed in \cite{disc}.  Then we outline possible generalizations of the mathematical framework itself.

\subsubsection{Capturing other types of geophysical problems with the current tools}
The mathematical tools developed here could go way beyond the  physical process investigated in \cite{disc}; in fact, they admit several types of generalizations, in particular:
    \begin{itemize}
        \item[(a)] the inclusion of additional steps, for example secondary cracking of multiple fragments,
        \item[(b)] the inclusion of other step-types, for example partial cracking,
        \item[(c)] the analysis of 3D crack patterns,
        \item[(d)] the analysis of the evolution of other geological patterns, for example ridge patterns, illustrated in Figure \ref{fig:00}(c),
        \item[(e)] the analysis of the evolution of other types of discrete patterns, for example, infinite planar graphs or non-convex tilings.
          \end{itemize}

We remark that the entries of the fundamental table were constants in our model. However, implementing the above listed generalizations may result in models where the entries in the fundamental table are functions $h(x,y)$.

\subsubsection{Possible generalizations of the mathematical tools}

In our model, described by the system (\ref{ode1}) of autonomous ordinary differential equations (ODEs),  we regarded the clock frequencies $\lambda_i$ and their ratio $\mu$  as \emph{temporal} constants and the combinatorial averages $x,y$ as \emph{spatial} constants. Needless to say, these are only approximations of real-world scenarios.

The frequencies $\lambda _i$ and their ratio $\mu$ represent the environment. In nature one would not expect the environment to remain constant and in the model this would be reflected in the time variation of the model parameters $\lambda_i, \mu$. In geophysical terms, high values of $\mu$ indicate an environment where rigid cracking is dominant while low $\mu$ values indicate an environment where crack healing/rearrangement is common. In mathematical terms, admitting the time variation of $\lambda_i, \mu$  would yield a non-autonomous system of ODEs which
would, at least formally, be also described by (\ref{ode1}), with added explicit time dependence. While this can be certainly handled numerically, the asymptotic properties of such a non-autonomous ODE would require a separate analysis.

In our model we neglected the spatial variation of the combinatorial averages $x,y$, thus we treated the mosaic as a homogeneous entity. The combinatorial properties of very large mosaics may substantially vary in space
and considering this variation may have major effects on the evolution. If we included this spatial dependence then our model would be generalized to a PDE model. Here, the observation that the spatial averages tend to get more uniform as the mosaic evolves in time would correspond to a diffusion term and this may
indeed be the basic scenario observed in nature.



\end{document}